\documentclass[11pt,a4paper]{article}

\usepackage[margin=1in]{geometry}

\usepackage[utf8]{inputenc}
\usepackage[english]{babel}
\usepackage[dvipsnames,usenames]{xcolor}
\usepackage[colorlinks=true,pdfpagemode=UseNone,linkcolor=blue,citecolor=blue,pdfstartview=FitH]{hyperref}

\usepackage{subcaption}

\usepackage{mathtools}
\usepackage{amsthm,thmtools} 
\usepackage{amsmath,amssymb}
\usepackage{amsfonts}
\usepackage{bm}
\usepackage{booktabs}

\usepackage[nameinlink]{cleveref}

\usepackage{tikz}
\usepackage{pgfplots}
\pgfplotsset{compat=newest}
\usetikzlibrary{shapes, shapes.symbols, backgrounds, matrix, calc, arrows, math, arrows.meta, decorations.pathreplacing, decorations.markings, patterns, calligraphy}

\usepackage{booktabs}

\usepackage{nicefrac}

\usepackage{natbib}

\usepackage{enumitem}
\usepackage[ruled]{algorithm2e} 

\SetAlFnt{\small}
\SetAlCapFnt{\small}
\SetAlCapNameFnt{\small}
\SetAlCapHSkip{0pt}
\IncMargin{-\parindent}

\providecommand{\email}[1]{\href{mailto:#1}{\nolinkurl{#1}\xspace}}

\usepackage{acronym}

\newtheorem{theorem}{Theorem}

\newtheorem{proposition}[theorem]{Proposition}

\newtheorem{definition}[theorem]{Definition}

\newtheorem{corollary}[theorem]{Corollary}

\newtheorem{remark}{Remark}
\title{Check, Please: Verifiably Fair Clustering}

\author{Yu He, Jeremy Vollen, Edith Elkind\\
    Northwestern University  \\
    \texttt{yuhe2030@u.northwestern.edu},
    \texttt{\{vollen ,edith.elkind\}@northwestern.edu}
}
\date{}

\begin{document}

\maketitle

\begin{abstract}
Popular centroid-based clustering methods are typically optimized for global objectives, and may fail to adequately represent large groups of datapoints. Thus, one needs notions of proportionality that are suited for metric settings. Ideally, such notions should 
admit polynomial-time algorithms for
(a) finding proportional outcomes, and (b) checking if a given outcome is proportional; the latter property enables us to evaluate the outputs of traditional algorithms that do not offer proportionality guarantees (such as e.g., $k$-means). 
A promising approach is to import proportionality notions from the setting of multiwinner voting with approval ballots.
In particular, mPJR---the metric version of the well-known Proportional Justified Representation (PJR) axiom---satisfies (a), but up until now it was not known whether it satisfies (b).
In this work, we study the computational complexity of auditing proportional representation in clustering. 
In the approval setting, PJR is known to be coNP-complete to verify; however, it admits a strengthening, known as PJR+, which satisfies (a) and (b).
We show that these results translate to the metric setting: we prove that mPJR is coNP-complete to verify, define a metric analog of PJR+, which we call mPJR+, and argue that mPJR+ satisfies (a) and (b). However, the algorithm for auditing (m)PJR+ relies on repeated submodular minimization, rendering it impractical at scale, and we show that a natural combinatorial approach to verifying mPJR+ is infeasible. 
As a partial remedy, we propose an mPJR+ verification algorithm whose running time is exponential in the number of centers $k$, but quasilinear in the number of datapoints; it is practical if $k$ is small, but does not scale well with $k$.
Motivated by these hardness results, we introduce Default Coalitions mPJR+ (DC-mPJR+): a new proportionality concept that offers representation guarantees to a restricted set of coalitions around unselected centers, and, as a result, admits an $O(mn \log n + mnk)$ verification algorithm. We establish that DC-mPJR+ outcomes can be computed efficiently and that DC-mPJR+ remains a meaningful proxy for global fairness: any solution satisfying $\gamma$-DC-mPJR+ also satisfies $(\gamma + 2)$-mPJR+. 
\end{abstract}

\section{Introduction}
\label{sec:intro}

Clustering is one of the most fundamental tasks in computer science, with applications spanning unsupervised learning, data compression, information retrieval, image analysis, and computational biology. 
A particularly prominent clustering paradigm is {\em centroid-based clustering}, where one selects a set of representative centers and assigns each datapoint to a center. 
The quality of an outcome is typically measured by a global objective that aggregates the distances from each datapoint to its nearest cluster center using a suitable operator, such as sum ($k$-means), sum of squares ($k$-median), or max ($k$-center)
\citep{jain1999data,XuWunsch2005SurveyClustering,AggarwalReddy2014DataClustering}.

However, a growing body of work argues that, in many applications, centroid-based clustering should be evaluated not only by aggregate fit, but also by the way representation is distributed across different parts of the dataset. Specifically, it may be desirable to ensure that sufficiently large and similar groups of datapoints are allocated a fair share of cluster centers. 
This perspective connects clustering to broader questions of fairness, and in particular to notions of proportionality, which require that each group receives representation commensurate with their size. 

The study of proportionality is a burgeoning subarea of (computational) social choice. It dates back to \citet{dummett1984voting}, who argued that a democratically elected committee should accurately represent the ballots submitted by its constituents. Following this intuition, representation is defined not in terms of pre-specified groups of voters, but rather in terms of coalitions that share similar preferences over the available candidates. For multiwinner voting with approval ballots, this perspective inspired the family of justified representation axioms \citep{aziz2017justified,sanchez2026proportional,PSP21a,brill2023robust,KalayciL025}; for voting with ranked ballots it led to the concept of Proportionality for Solid Coalitions \citep{dummett1984voting,AzLe20a}.

Given this rich literature, a natural approach to defining proportionality in the context of clustering is by means of importing notions of proportional representation from social choice \citep{chen2019proportionally, micha2020proportionally, aziz2024proportionally}.
The most prominent concept that was derived in this manner is \emph{Proportionally Representative Fairness} (PRF) \citep{aziz2024proportionally}, which is closely connected to the notion of Proportional Justified Representation (PJR) \citep{sanchez2026proportional} from multiwinner voting: 
\citet{kellerhals2024proportional} formalized mPJR as the metric analog of PJR and showed its equivalence to PRF. 

Recent algorithms specifically designed to enforce these representation guarantees---such as the Greedy Capture of \citet{chen2019proportionally} and the Spatial Expanding Approval Rule (SEAR) of \citet{aziz2024proportionally}---demonstrate that strong fairness and representation properties, including PRF, can be achieved in an algorithmically efficient way. 
However, the classical objective-driven approaches, especially $k$-means and related center-based methods,  
remain dominant in practice because of their simplicity, scalability, and long record of empirical usefulness \citep{jain1999data,berkhin2006survey,ghadiri2021socially}.
This is because in many real-world use cases, proportional representation is a secondary design goal rather than the primary optimization criterion. 
Accordingly, practitioners may be unwilling to abandon classical methods entirely, but may still wish to understand whether the solutions they already compute provide adequate representation in practice. 
Thus, we need to develop tools that can audit the outputs of standard centroid-clustering algorithms, i.e., determine whether a given set of cluster centers satisfies our proportionality criteria.

Motivated by these concerns, we study the problem of verifying proportional representation for centroid clustering outcomes, with the goal of identifying a concept that simultaneously (1) captures a meaningful proportional representation requirement, (2) is satisfied by a polynomial-time algorithm, and (3) admits a verification algorithm fast enough to be practical on large-scale clustering datasets.

\subsection{Our Contributions}
We first prove that PRF --- the only extant property capturing proportional representation in centroid-based clustering that is guaranteed to be satisfiable on every instance --- is coNP-complete to verify. 
To do so, we give a general reduction from approval-based committee voting to centroid-based clustering. Our result then follows from coNP-completeness of verifying PJR in the approval setting \citep{aziz2018complexity}.

Next, we turn to defining a proportional representation property that is polynomial-time verifiable.
For the approval setting, \citet{brill2023robust} propose PJR+: a strengthening of PJR that is both polynomial-time computable and polynomial-time verifiable.
We formulate a metric analog of PJR+, which we call mPJR+, and show that it is indeed polynomial-time verifiable, and moreover, that it is satisfied by SEAR.

However, the verification algorithm of \citet{brill2023robust} (as well as our adaptation of this algorithm to the metric setting) relies on solving a sequence of submodular minimization problems, which renders it unrealistic at scale.
Furthermore, we show that a natural approach toward a more efficient verification algorithm for mPJR+, namely, conditioning on the  representation level $\ell$ that a group may demand based on its size, 
requires solving a series of coNP-complete problems. 

As a partial remedy, we propose an (m)PJR+ verification algorithm whose running time is exponential in the number 
of centers $k$, 
but quasilinear in $n$; this approach 
is useful if the target number of clusters is small, but becomes impractical if $k$ is large.

Thus, we search for a new property that can be verified tractably for large datasets.
We define \emph{ Default Coalitions mPJR+} (DC-mPJR+), which relaxes mPJR+ by restricting guarantees to coalitions of a particular structure --- for each unselected center and representation level $\ell$, we only give guarantees to the tightest group of agents around that center large enough to justify $\ell$ representatives. 
We establish that DC-mPJR+ is efficiently verifiable, by presenting an $O(mn\log n + mnk)$ verification algorithm. 
We also demonstrate that DC-mPJR+ meaningfully captures proportional representation: we argue that it is equivalent to mPJR+ up to a constant multiplicative factor, with the equivalence extending to approximate versions of the two properties. Together, our results identify a practical and theoretically grounded path for auditing proportional representation in clustering.

\subsection{Related Work}
\label{sec:related_work}

\paragraph{Proportional representation in social choice.}
Proportional representation is a central principle in collective decision-making, which posits that large groups of voters with sufficiently similar preferences should receive representation commensurate with their size. 
In approval-based multiwinner voting, this idea is captured by a hierarchy of \emph{justified representation} properties, such as JR, PJR, EJR, PJR+, EJR+, FPJR, FJR, and the core (see \citet{lackner2023multi} for a survey of this area). 
Beyond approval-based preferences, similar notions have been defined in settings with additive utilities \citep{PSP21a}, rankings \citep{dummett1984voting}, and weak orders \citep{AzLe20a,brill2023robust}.
However, even in the approval-based setting, verification of PJR and EJR is coNP-complete \citep{aziz2018complexity}. This motivated \citet{brill2023robust} to introduce stronger and polynomial-time verifiable properties called PJR+ and EJR+.
Our work imports this verification perspective into metric clustering.

\paragraph{Proportionality and fairness in clustering.}
In the context of clustering, fairness may be interpreted as achieving balanced representation with respect to exogenous attributes \citep[see, e.g.,][]{chierichetti2017fair,ahmadian2019clustering,abbasi2021fair}, or as allocating cluster centers to groups of datapoints in a way that is commensurate with the group size and diameter
\citep[see, e.g.,][]{chen2019proportionally,jung2020service,li2021approximate,micha2020proportionally,aziz2024proportionally,kellerhals2024proportional}. 
Our work belongs to the latter strain of research.

\citet{chen2019proportionally} introduced the idea of proportional fairness for clustering and put forward the Greedy Capture algorithm for computing approximately proportional solutions. 
They also consider the verification problem and give efficient algorithms for it.
However, it has been argued that proportional fairness --- along with other similar notions \citep{jung2020service,li2021approximate} --- is not expressive enough to address concerns of proportional representation \citep{aziz2024proportionally}.
This motivated \citet{aziz2024proportionally} to introduce PRF: a property that captures proportional representation, is guaranteed to be satisfiable, and is computable in polynomial time.
Other works have contributed definitions that are similar in spirit, such as $(\alpha,q)$-core \citep{ebadian2025boosting} and $(\alpha,\gamma)$-proportional representation \citep{kalayci2024proportional}.
In this work, we focus on PRF because of its existence guarantees; however, considering these other properties from an auditing perspective is an interesting direction for future work  (\Cref{sec:discussion-conclusion}). \citet{kellerhals2024proportional} subsequently formalized metric variants of JR-like notions (including PJR) and explored connections
among various proportionality axioms; in particular, they showed that metric PJR (mPJR) is equivalent to PRF.

\paragraph{Verification and auditable fairness.}
The clustering literature discussed earlier in this section primarily focuses on designing algorithms that
produce fair outcomes or on establishing relationships between different fairness notions. In contrast, we
study the audit problem: given an arbitrary center selection, possibly produced
by a standard clustering method such as \(k\)-means or \(k\)-median, can one
efficiently certify whether it satisfies a proportional representation guarantee?
In approval-based multiwinner voting, verifying PJR is coNP-complete
\citep{aziz2018complexity}, while PJR+ admits polynomial-time verification
\citep{brill2023robust}; similar hardness results are known for other JR-style axioms, such as EJR \citep{aziz2017justified}, FJR and FPJR \citep{KalayciL025}. In metric clustering, verification has received much
less attention. Our work addresses this gap by (1) proving that mPJR verification is
coNP-complete, (2) defining mPJR+ as the metric analog of PJR+ and showing that it is
polynomial-time verifiable but computationally impractical at scale, and
(3) introducing DC-mPJR+, a structured relaxation that restricts attention to `tight'
coalitions, admits an \(O(mn\log n+mnk)\)-time verification algorithm, and remains
within a constant factor of mPJR+.
\section{Preliminaries}
\label{sec:preliminaries}
Let $(\mathcal{U}, d)$ be a metric space, where $\mathcal{U}$ is the universe of points and $d: \mathcal{U} \times \mathcal{U} \to \mathbb{R}_{\geq 0}$ is a distance function satisfying the properties of non-negativity, identity of indiscernibles, symmetry, and the triangle inequality.
For our algorithmic results, we assume that $d$ is computable in time $O(1)$.
 For any point $x \in \mathcal{U}$ and radius $r \ge 0$, we define the \textit{closed ball} $B(x, r)$ as the set of all points within distance $r$ from $x$, i.e.,
    \begin{equation}
        B(x, r) := \{y \in \mathcal{U} \mid d(x, y) \le r\}. \nonumber
    \end{equation}
%
We consider the problem of {\em centroid clustering}, i.e.,  selecting a set of $k$ representative centers for a multiset of $n$ datapoints in a metric space. 
Formally, an instance of centroid clustering in a metric space $({\mathcal U}, d)$ is a tuple $(N, M, k)$, where $N \subseteq \mathcal{U}$ is a multiset of $n$ datapoints, $M \subseteq \mathcal{U}$ is a set of $m$ candidate centers (or simply, \emph{candidates}), and $k\in [m]$ 
is the target number of clusters. The objective is to choose a \emph{center selection}, i.e., a subset of candidate centers $X \subseteq M$ with $|X| = k$. 
Following much of the literature on proportionality and fairness in centroid clustering, we will typically refer to the datapoints $N$ as \emph{agents}.  
For notational brevity, we let $q :=  n/k$ denote the {\em quota}: the minimum number of agents entitled to representation by one center in the selection.

\paragraph{Multiwinner Voting}
We build on prior works that link centroid clustering to multiwinner voting  with approval ballots. The multiwinner voting model is formally defined as follows.

\begin{definition}[Multiwinner voting]
An instance of {\em multiwinner voting with approval ballots} is given by a tuple $(N,M,(\mathcal A_i)_{i\in N},k)$, which consists of a set of agents (or voters) $N$, a set of candidates $M$, an approval profile $(\mathcal A_i)_{i\in N}$ giving the set of approved candidates $\mathcal A_i\subseteq M$ for each agent $i\in N$, and a target committee size~$k$. The goal is to select a winning set (a committee) $X\subseteq M$ of size $k$.
\end{definition}

There is a number of proportionality concepts for this setting that are based on the idea that any group of agents that is sufficiently large and has sufficiently cohesive preferences must be adequately represented in the committee. The requirement of cohesive preferences, typically interpreted as approval sets of the group members having a large common intersection, is critical to guaranteeing existence of outcomes satisfying the property. We will now introduce one such concept, namely, Proportional Justified Representation (PJR) \citep{sanchez2026proportional}. In what follows, we will explain why we chose PJR as our starting point.

\begin{definition}[PJR]
Given an instance $(N,M,(\mathcal A_i)_{i\in N},k)$ of multiwinner voting with approval ballots, a size-$k$ committee $X\subseteq M$ is said to satisfy {\em Proportional Justified Representation} (PJR) if for every integer $\ell\in[k]$ and every group of agents $S\subseteq N$ with $|S|\ge \ell\cdot q$
that satisfies $\left| \bigcap_{i \in S} {\mathcal A}_i \right| \ge \ell$ it holds that
$\left| X\cap \bigcup_{i \in S} {\mathcal A}_i \right| \ge \ell$.
\end{definition}

\paragraph{From Multiwinner Voting to Clustering}
\citet{kellerhals2024proportional} propose a general framework for translating notions of proportionality
from the setting of multiwinner voting with approval ballots to centroid-based clustering. To present this framework, it will be convenient to introduce the definition of a Group Approval Set: for any group of voters $S$ and a radius $r$, we define the approval set of $S$ at radius $r$ to consist of all candidates that lie at distance at most $r$ from at least one member of $S$.

\begin{definition}[Group Approval Set]
    Given an instance of centroid clustering $(N, M, k)$, the \emph{approval set} of a subset of agents $S\subseteq N$ at a radius $r\ge 0$, denoted $A_r(S)$, is the set of candidates within distance $r$ of at least one agent in $S$:
    \begin{equation}
        A_r(S) := \bigcup_{i\in S} \bigl(B(i,r)\cap M\bigr).
        \nonumber
    \end{equation}
    When $S=\{i\}$ is a singleton, we write $A_r(i)$ instead of $A_{r}(\{i\})$.
\end{definition}

Another important concept is that of $\ell$-cohesiveness: we say that a group of agents $S \subseteq N$ is \textit{$\ell$-cohesive at radius $r$} if $\left| \bigcap_{i \in S} A_r(i) \right| \ge \ell.$
In words, there are at least $\ell$ candidates that every member of $S$ would approve at radius $r$.

We are now ready to present the transformation proposed by \citet{kellerhals2024proportional}. Given an instance $(N, M, k)$ of centroid clustering in a metric space $({\mathcal U}, d)$, for each $r\in{\mathbb R}^+$ we construct an instance ${\mathcal I}_r = (N,M,(\mathcal A_i)_{i\in N},k)$ of  multiwinner voting with approval ballots by setting $\mathcal A_i=A_r(i)$ for each $i\in N$. Then, for a proportionality notion $\Pi$ that is defined for the setting of multiwinner voting with approval ballots, a center selection $X\subseteq M$ is said to satisfy {\em metric $\Pi$}, written as $m\Pi$, whenever $X$ satisfies $\pi$ in ${\mathcal I}_r$ for each $r\in{\mathbb R}^+$.

\citet{kellerhals2024proportional} argue that the notion of metric PJR (mPJR), defined by applying their transformation to PJR, is equivalent to the concept of \emph{Proportionally Representative Fairness} (PRF), introduced in an earlier paper by \citet{aziz2024proportionally}. 

\begin{definition}[PRF/mPJR]
    A center selection $X \subseteq M$ satisfies \textit{PRF/mPJR} if, for every radius $r \ge 0$, every integer $\ell \in [k]$, and every group of agents $S\subseteq N$ with $|S| \ge \ell \cdot q$ that is $\ell$-cohesive at radius $r$, 
    it holds that 
    $$|X \cap A_r(S)| \ge \ell.$$
\end{definition}

That is, if $\ell \cdot q$ agents all approve at least $\ell$ common candidates at radius $r$, then the center selection must include at least $\ell$ centers from their Group Approval Set $A_r(S)$.

An attractive feature of PRF (and hence mPJR) is that there is a polynomial-time algorithm that on every instance of centroid clustering admits a center selection that satisfies PRF \citep{aziz2024proportionally}. While in multiwinner setting there are more powerful axioms with this property, such as, e.g., EJR \citep{aziz2017justified}, prior to our work PJR was the strongest axiom for approval voting whose metric variant was known to be satisfiable for every instance.

In what follows, we will introduce alternative notions targeting proportional representation that are closely related to mPJR.

\section{The Intractability of Verifying Global Fairness}
\label{sec:hardness}

We now study the computational complexity of verifying whether a given center selection provides proportional representation guarantees. We show that verifying mPJR is coNP-complete (Section~\ref{sec:mpjr-hard}), introduce mPJR+ as a polynomial-time verifiable alternative (Section~\ref{sec:mpjr_plus}), show that the verification procedure for mPJR+ is impractical at scale, with a plausible route to improvement blocked by a coNP-hardness result (Section~\ref{sec:limits}), and describe a partial fix for small $k$ (Section~\ref{sec:fpt}): an algorithm whose running time can be bounded as $O(2^k(nk+m\log n))$.

\subsection{The coNP-Completeness of mPJR}\label{sec:mpjr-hard}

To show that verifying mPJR is coNP-complete, we reduce from the approval-based PJR verification problem, which is known to be coNP-complete \citep{aziz2018complexity}. 
The key tool is an embedding that translates instances of multiwinner voting with approval ballots into instances of centroid selection.

\begin{restatable}[Approval-to-metric embedding]{lemma}{lemApprovalMetric}
\label{lem:approval-metric}
Let $(N,M,(\mathcal A_i)_{i\in N},k)$ be an instance of multiwinner voting with approval ballots. 
One can construct, in polynomial time, a metric space $(\mathcal{U},d)$ with $\mathcal{U}=N\cup M$ and a polynomial-time computable distance function $d$ that takes values in $\{0, 1, 2, 3, 4\}$
so that
for every $S\subseteq N$ and every $1\le r<2$ it holds that
\[
\bigcap_{i\in S}(B(i,r)\cap M)=\bigcap_{i\in S} \mathcal A_i
\qquad\text{and}\qquad
A_r(S)=\bigcup_{i\in S} \mathcal A_i.
\]
\end{restatable}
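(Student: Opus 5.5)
We will construct an explicit distance function on the disjoint union $\mathcal U=N\cup M$ and then check the metric axioms and the two set identities. We set $d(x,x)=0$ for every $x\in\mathcal U$. For $i\in N$ and $c\in M$ we let $d(i,c)=1$ if $c\in\mathcal A_i$ and $d(i,c)=3$ otherwise. For distinct agents $i,j\in N$ and distinct candidates $c,c'\in M$ we put $d(i,j)=2$ and $d(c,c')=2$. If duplicate agents must be allowed (since $N$ is a multiset), they are simply distinct points at distance $2$; this does not affect the argument. All values lie in $\{0,1,2,3\}\subseteq\{0,1,2,3,4\}$. The table has $O((n+m)^2)$ entries, each determined by a single membership test, so it is computable in polynomial time.

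\emph{Metric axioms.} Nonnegativity, symmetry and identity of indiscernibles hold by construction. For the triangle inequality $d(x,z)\le d(x,y)+d(y,z)$ with $x,y,z$ pairwise distinct, every nonzero distance lies in $[1,3]$. Hence the right-hand side is at least $2$. The only way it can fail is if the left-hand side equals $3$ while the right-hand side equals $2$. That would require $d(x,y)=d(y,z)=1$.

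Distance $1$ occurs only between an agent and a candidate. So $y$ is of one type and $x,z$ are both of the other type. Then $d(x,z)=2$ because $x\neq z$, and the inequality holds. Cases in which two of the three points coincide are trivial. This case check is the only real obstacle, and the choice of the values $1,2,3$ is designed so that it closes: same-type distances are $2$, and non-approval distances are $3\le 1+2$.

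\emph{Set identities.} Fix $1\le r<2$. For an agent $i$, the set $B(i,r)\cap M$ consists of the candidates $c$ with $d(i,c)\le r$. Since $d(i,c)\in\{1,3\}$ and $1\le r<3$, this is exactly $\mathcal A_i$.

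Intersecting over $i\in S$ yields $\bigcap_{i\in S}(B(i,r)\cap M)=\bigcap_{i\in S}\mathcal A_i$. Taking the union over $i\in S$ and applying the definition of the Group Approval Set yields $A_r(S)=\bigcup_{i\in S}\mathcal A_i$. Both identities in fact hold for the whole range $1\le r<3$, which contains the required range $1\le r<2$.
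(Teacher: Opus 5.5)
Your proof is correct, but the metric is built differently from the paper's. The paper takes the complete bipartite graph on $N\cup M$, with weight $1$ on approval edges and $2$ on disapproval edges, and uses the shortest-path distance. That gives the triangle inequality for free. The cost is a parity argument: any other agent--candidate path has at least three edges, hence weight at least $3$, which shows $d(i,c)$ equals the edge weight. Same-type distances then range over $\{2,3,4\}$.

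You instead write the whole table down explicitly: agent--candidate distance $1$ or $3$, all same-type distances $2$. You then check the triangle inequality by hand, and your check is right. The only possible violation, $3>1+1$, forces $x$ and $z$ to be of the same type, where the distance is $2$. Your metric is closed-form, uses only the values $\{0,1,2,3\}$, and the identities hold on the wider window $1\le r<3$; the paper's threshold is $2$.

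If you plug your metric into the later proofs (Theorem~\ref{thm:mpjr-hard} and Proposition~\ref{prop:mpjrplus_fixedell}), note that they rely on features of the construction beyond the lemma's statement. Specifically, they use $B(i,r)\cap M=\varnothing$ for $r<1$, $B(i,r)\cap M=M$ for $r\ge 2$, and agent--candidate distances in $\{1,2\}$. With your metric these hold with $3$ in place of $2$, so those case splits need the obvious adjustment, but nothing breaks.
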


The proof proceeds by constructing a weighted bipartite graph between agents and candidates, and assigning weight 1 to approval edges and weight 2 to dispproval edges; $d$ is then defined as the shortest-path metric. 
The full construction is given in Appendix~\ref{proof of lemma}.

This embedding preserves both cohesiveness and coverage at radius $r \in [1,2)$, so any PJR violation in the approval instance corresponds exactly to an mPJR violation in the metric instance, and vice versa. Combined with the coNP-completeness of approval-based PJR verification \citep{aziz2018complexity}, this yields the following theorem (see Appendix~\ref{proof of thm} for proof details).

\begin{restatable}{theorem}{thmMpjrHard}\label{thm:mpjr-hard}
Checking whether a given center selection satisfies mPJR is coNP-complete.
\end{restatable}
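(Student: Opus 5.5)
The proof has two parts: membership in coNP, and coNP-hardness via a reduction from PJR verification in the approval setting, using \Cref{lem:approval-metric}.

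\emph{Membership.} A violation of mPJR is witnessed by a triple $(r,\ell,S)$ with $|S|\ge \ell q$, $|\bigcap_{i\in S}A_r(i)|\ge \ell$ and $|X\cap A_r(S)|<\ell$. The radius ranges over the reals, so first observe that the sets $B(i,r)\cap M$ only change at the finitely many values $r\in\{d(i,c): i\in N, c\in M\}$. We may therefore assume $r$ is one of these $nm$ distances. The certificate is then polynomial in size, and each of its three conditions can be checked in polynomial time because $d$ is computable efficiently. Hence the complement of mPJR verification lies in NP.

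\emph{Hardness.} Start from an instance $(N,M,(\mathcal A_i)_{i\in N},k)$ of approval voting and a committee $X$ of size $k$. Apply \Cref{lem:approval-metric} to obtain $(\mathcal U,d)$, and consider the clustering instance $(N,M,k)$ with the same center selection $X$. The quota $q=n/k$ is unchanged.

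The forward direction is immediate. Suppose $(\ell,S)$ witnesses a PJR violation. Take $r=1$. By the lemma, $S$ is $\ell$-cohesive at radius $1$ and $A_1(S)=\bigcup_{i\in S}\mathcal A_i$, so $|X\cap A_1(S)|<\ell$ and $(1,\ell,S)$ is an mPJR violation.

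The converse is the main obstacle, because an mPJR violation could occur at a radius outside $[1,2)$. I handle the range of $r$ by cases.
\begin{enumerate}
\item If $r<1$: when agent--candidate distances are at least $1$, each ball $B(i,r)$ contains no candidates, so no group is $\ell$-cohesive for any $\ell\ge 1$. If the construction places some agent at distance $0$ from a candidate, I would instead rescale the distances so that all agent--candidate distances lie in $\{1,2,\dots\}$.
\item If $r\ge 2$: by the construction, every candidate is within distance $2$ of every agent (disapproval edges have weight $2$). Hence $A_r(S)=M$ for every nonempty $S$, and $|X\cap A_r(S)|=k\ge \ell$, so no violation exists.
\item If $r\in[1,2)$: the lemma translates cohesiveness and coverage exactly, so the violation is also a PJR violation in the approval instance.
\end{enumerate}
Thus $X$ satisfies PJR if and only if it satisfies mPJR.

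Since PJR verification is coNP-complete \citep{aziz2018complexity} and the reduction runs in polynomial time, mPJR verification is coNP-hard. Combined with membership, it is coNP-complete.
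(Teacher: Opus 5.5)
Your proposal is correct and follows the paper's proof essentially step for step. Membership uses a certificate $(r,\ell,S)$ with $r$ restricted to the finitely many agent--candidate distances. Hardness goes through the embedding of \Cref{lem:approval-metric}: radius $r=1$ turns a PJR violation into an mPJR violation, and the same case split $r<1$, $r\ge 2$, $1\le r<2$ turns an mPJR violation back into a PJR violation. The rescaling caveat in your first case is unnecessary, since the construction gives $d(i,c)\in\{1,2\}$ for every agent $i$ and candidate $c$.
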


Consequently, unless P=NP, there is no efficient algorithm that can verify whether an arbitrary center selection satisfies mPJR. 
This motivates the search for an alternative fairness axiom that retains the spirit of proportional representation while admitting efficient verification.

\subsection{A Metric Analog of PJR+}
\label{sec:mpjr_plus}

In the setting of multiwinner voting with approval ballots, PJR is coNP-complete to verify; several stronger properties, such as EJR, FJR, and FPJR, are hard to verify as well \citep{aziz2017justified,KalayciL025}.
In part to circumvent these hardness results, \citet{brill2023robust} introduced PJR+ as a strengthening of PJR and showed that it admits polynomial-time verification. 
The key structural insight behind PJR+ is to anchor each potential violation to a specific unselected candidate, rather than quantifying over arbitrary cohesive groups.
Formally, PJR+ is defined as follows.

\begin{definition}[PJR+]
Given an instance $(N,M,(\mathcal A_i)_{i\in N},k)$ of multiwinner voting with approval ballots, a size-$k$ committee $X\subseteq M$ is said to satisfy PJR+ if for every integer $\ell\in[k]$, every unselected candidate $c\in M\setminus X$ and every group of agents $S\subseteq N$ with $|S|\ge \ell\cdot q$ such that $c\in {\mathcal A}_i$ for all $i\in S$
 it holds that
$\left| X\cap \bigcup_{i \in S} {\mathcal A}_i \right| \ge \ell$.
\end{definition}

We adapt this idea to the metric setting using the Kellerhals--Peters transformation, and refer to the resulting notion as mPJR+. 
Intuitively, where mPJR asks whether any sufficiently large and cohesive group lacks representation,
mPJR+ asks whether any unselected center $c$ witnesses a violation.
As a result, a certificate of violation becomes a single center and a group of agents, as opposed to a set of centers and a group of agents.
As we will see, even in the metric setting, this change suffices to admit polynomial-time verification via submodular minimization. 
Our definition of mPJR+ is parameterized by a multiplicative approximation factor, which relaxes the distance between agents in $S$ and their closest representatives in $X$.  

\begin{definition}[$\gamma$-mPJR+]\label{def:mpjrplus}
    Given an instance $(N, M, k)$ of centroid clustering, a center selection $X \subseteq M$ is said to satisfy $\gamma$-mPJR+ for $\gamma \geq 1$ if 
    for every integer $\ell\in[k]$, every unselected center $c\in M\setminus X$, every group of agents $S\subseteq N$ with $|S|\ge \ell\cdot q$, and for $r(c, S) := \max_{i \in S} d(i,c)$
    it holds that
    \[
    |X \cap A_{\gamma\cdot r(c, S)}(S)| \ge \ell.
    \]
     When $\gamma=1$, we simply write mPJR+.
\end{definition}

\begin{remark}
It is not immediate that \Cref{def:mpjrplus} (with $\gamma=1$) is obtained by following the template of \citet{kellerhals2024proportional}: 
a  direct application of their transformation results in an axiom that makes the same requirement, but for every radius $r\geq 0$ (instead of $r(c, S)$) and with the additional requirement that $d(i,c)\leq r$ for each $i\in S$. To argue that the two axioms are equivalent, we observe that a certificate of a violation of the latter axiom for some radius $r\geq 0$, candidate $c$, group $S$, and integer $\ell$ also constitutes a violation with a radius value of $r(c, S)$. 
To see this, note that 
if $d(i,c)\leq r$ for all $ i\in S$ then $r\geq r(c, S)$, and thus $|X \cap A_{r(c, S)}(S)| \leq |X \cap A_r(S)| < \ell$. 
A formal equivalence proof is given in Appendix~\ref{app:kp-template}.
\end{remark}

%
\begin{remark}
As approval-based PJR+ implies approval-based PJR, the equivalence above together with the Kellerhals--Peters lifting template also imply that mPJR+ implies mPJR. For completeness, we give a direct argument. Consider a violation of mPJR, witnessed by a group $S$ that is $\ell$-cohesive at radius $r$ with $|X \cap A_r(S)| < \ell$. Note that all candidates in \(\bigcap_{i\in S} A_r(i)\) belong to
\(A_r(S)\) and \(|X\cap A_r(S)|<\ell\), so at least one candidate in \(\bigcap_{i\in S} A_r(i)\) is unselected. 
Choose some such candidate \(c\in \bigcap_{i\in S} A_r(i) \cap (M\setminus X)\). 
Then \(\max_{i\in S}d(i,c)\le r\), so \(c\) serves as an anchoring center for an mPJR+ violation.
\end{remark}

Since mPJR is the strongest proportional representation criterion known to always be satisfiable in centroid-based clustering, a natural first question is whether mPJR+ is always satisfiable. 
We will now answer this question in the affirmative by showing that mPJR+ is satisfied by the Spatial Expanding Approval Rule (SEAR) \citep{aziz2024proportionally} --- the same (polynomial-time) algorithm used to show satisfiability of mPJR.
At a high level, SEAR computes a center selection by initializing each agent with a budget of $1$ 
and uniformly growing balls centered at the candidates.
It iteratively selects candidates when their ball contains agents whose collective budget is at least $q=n/k$ 
and updates the individual budgets of these agents so that their joint budget is reduced by $q$. 

\begin{restatable}{proposition}{propSearPjrplus}
\label{prop:sear-pjrplus}
    The center selection returned by SEAR always satisfies mPJR+.
\end{restatable}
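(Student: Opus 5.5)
Assume, for contradiction, that SEAR outputs $X$ and some $\ell\in[k]$, unselected center $c\in M\setminus X$, and group $S\subseteq N$ with $|S|\ge \ell q$ give $|X\cap A_{r^*}(S)|<\ell$, where $r^*:=r(c,S)=\max_{i\in S}d(i,c)$. The argument tracks the budgets of the agents in $S$ during the run of SEAR.

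First, I would establish a budget-accounting invariant. An agent $i\in S$ can only spend budget on a candidate $x$ that is selected while its ball contains $i$. If that happens at a radius $\rho\le r^*$, then $d(i,x)\le\rho\le r^*$, so $x\in A_{r^*}(S)$. Each selection removes exactly $q$ units of total budget. Hence, up to the moment the growing radius reaches $r^*$, the agents in $S$ have jointly spent at most $q\cdot|X\cap A_{r^*}(S)|\le q(\ell-1)$. Their remaining joint budget at radius $r^*$ is therefore at least $|S|-q(\ell-1)\ge q$.

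Second, I would use the fact that $c$ is unselected. At radius $r^*$, the ball around $c$ contains all of $S$, since every $i\in S$ satisfies $d(i,c)\le r^*$. So the ball holds at least $q$ units of budget. SEAR's rule selects any candidate whose ball holds collective budget at least $q$. Thus, at radius $r^*$, either $c$ itself or some other candidate must be selected, and the process cannot move past radius $r^*$ while $c$'s ball still carries budget $\ge q$. I would formalize this as follows: if SEAR terminates with $k$ centers before $S$'s budget drops below $q$, we get a contradiction by counting total budget. The total budget is $n=kq$, and each of the $k$ selections consumes exactly $q$. After $k$ selections all budget is gone, so $S$'s remaining budget is $0<q$. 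Therefore, at some radius $\rho\le r^*$, further selections must draw on $S$'s budget.

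Third, I would combine the two steps to finish. Budget of $S$-agents can be reduced at radius $\le r^*$ only by selecting candidates within distance $r^*$ of some member of $S$, and each such selection removes at most $q$ from $S$. Lowering $S$'s budget from $|S|\ge \ell q$ to below $q$, which must happen no later than radius $r^*$ by the second step, therefore requires at least $\ell$ selected candidates in $A_{r^*}(S)$. This contradicts $|X\cap A_{r^*}(S)|<\ell$.

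The main obstacle is the second step. It requires pinning down SEAR's exact semantics: ties at equal radii, the fact that the run terminates precisely when $k$ centers are chosen, and whether budget can also be charged to agents at distance exactly equal to the current radius. I would first check that the closed-ball convention in SEAR makes the radius-$r^*$ event include all of $S$. I would then argue that SEAR never advances the radius beyond $r^*$ while some candidate's ball holds budget $\ge q$. If needed, I would instead cite the corresponding PRF proof of \citet{aziz2024proportionally} and note that it only ever uses the single center $c$, so it directly gives the stronger anchored statement.
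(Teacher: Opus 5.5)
Your proposal is correct and follows essentially the same argument as the paper. You bound the budget that $S$ spends at radii $\le r(c,S)$ by $q\cdot|X\cap A_{r(c,S)}(S)|\le(\ell-1)q$, and then use that the unselected center $c$ stays affordable, so SEAR cannot advance past $r(c,S)$. Your explicit treatment of the case where SEAR stops with $k$ centers before reaching $r(c,S)$ (total budget $n=kq$ is then exhausted) fills in a detail the paper leaves implicit.
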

 
The proof is a direct adaptation of the argument that SEAR satisfies mPJR \citep{aziz2024proportionally}, and can be found in Appendix~\ref{proof of propSEAR}.

Having established satisfiability, we now turn to the verification complexity of mPJR+. We show that, unlike mPJR, the new mPJR+ property is polynomial-time verifiable.
Our verification algorithm adapts the argument of \citet{brill2023robust} for PJR+ to the metric setting.

\begin{restatable}{proposition}{propPjrplusCheck}
\label{prop:pjrplus-check}
    Given an instance $(N, M, k)$ of centroid clustering and a center selection $X\subseteq M$, it can be verified in polynomial time whether $X$ satisfies mPJR+.
\end{restatable}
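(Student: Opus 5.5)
We reduce verification to a polynomial number of submodular minimization problems, following the approach of \citet{brill2023robust}. A violation of mPJR+ (with $\gamma=1$) is a triple $(\ell,c,S)$ with $c\in M\setminus X$, $|S|\ge \ell q$ and $|X\cap A_{r(c,S)}(S)|<\ell$.

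\textbf{Step 1: reduce to finitely many radii.} Fix an unselected center $c$ and a radius $r$. The only relevant radii are $r\in R_c:=\{d(i,c): i\in N\}$, and there are at most $n$ of them. If $(\ell,c,S)$ is a violation and $r:=r(c,S)$, then $S\subseteq N_{c,r}:=N\cap B(c,r)$. Conversely, suppose some $S\subseteq N_{c,r}$ has $|S|\ge \ell q$ and $|X\cap A_r(S)|<\ell$. Then $r(c,S)\le r$ gives $A_{r(c,S)}(S)\subseteq A_r(S)$, so $(\ell,c,S)$ is again a violation. Hence $X$ violates mPJR+ if and only if, for some $c\in M\setminus X$, some $r\in R_c$ and some $\ell\in[k]$, there is an $S\subseteq N_{c,r}$ with $|S|\ge \ell q$ and $|X\cap A_r(S)|\le \ell-1$. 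This gives $O(mnk)$ subproblems.

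\textbf{Step 2: solve each subproblem by submodular minimization.} In the instance ${\mathcal I}_r$ with $c$ fixed, the question becomes: among the agents in $N_{c,r}$, what is the largest group whose members together approve at most $\ell-1$ centers of $X$? Define, for $S\subseteq N_{c,r}$,
\[
f_\lambda(S):=|X\cap A_r(S)|-\lambda|S|.
\]
The map $S\mapsto |X\cap A_r(S)|$ is a coverage function, since it counts the elements of $X$ in the union of the sets $B(i,r)\cap M\cap X$ over $i\in S$. Coverage functions are monotone submodular, and subtracting a modular term preserves submodularity, so $f_\lambda$ is submodular. Following Brill and Peters, we would compute the function
\[
g(t):=\min\{|X\cap A_r(S)| : S\subseteq N_{c,r},\ |S|\ge t\}
\]
for the threshold $t=\lceil \ell q\rceil$, and test whether $g(\lceil \ell q\rceil)\le \ell-1$.

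\textbf{Main obstacle: the cardinality constraint.} Submodular minimization under a cardinality lower bound is hard in general, so we cannot simply impose $|S|\ge t$. We would avoid it using the small range of the coverage term, which takes values only in $\{0,\dots,k\}$. For each target value $j\in\{0,\dots,\ell-1\}$, we ask whether some $S$ with $|X\cap A_r(S)|\le j$ has $|S|\ge \ell q$.

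An alternative is to iterate over subsets $Y\subseteq X$ of allowed centers and take the maximal $S$ (all agents in $N_{c,r}$ whose approved centers within $X$ lie in $Y$). That approach is exponential in $k$, which is the idea behind the FPT algorithm in Section~\ref{sec:fpt}.

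For polynomial time we would instead use the parametric form. Minimizing $f_\lambda$ over all $\lambda$ traces out the lower convex envelope of the points $(|S|,|X\cap A_r(S)|)$. The breakpoints of this envelope can be found with polynomially many minimizations, for instance by Dinkelbach or Newton iteration over $\lambda\in\{a/b\}$ with small integers $a,b$. Each minimization runs in polynomial time by the algorithms of Gr\"otschel--Lov\'asz--Schrijver or Iwata--Fleischer--Fujishige.

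The hard point is that the envelope only certifies points on the convex hull. A violation could sit strictly above it: a set of size $\ge \ell q$ with coverage below $\ell$ that is not a minimizer of any $f_\lambda$. I expect this is where the argument needs care. The repair I would attempt is the following. Brill and Peters show that for PJR+ it suffices to test sets maximal for their coverage, together with the fact that the violation condition is monotone in $\ell$ (a larger $\ell$ requires both larger $|S|$ and larger coverage). With this, one can argue that a violation exists if and only if some hull point $(s,v)$ satisfies $s\ge \lceil (v+1)q\rceil$, possibly after lowering $\ell$ to $v+1$.

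If this convexity argument breaks down, the fallback is to transfer their PJR+ verification algorithm verbatim. Apply it to each approval instance ${\mathcal I}_r$ with $r\in R_c$, restricted to the candidate $c$ and the agent set $N_{c,r}$. By Step 1 this covers every potential violation and keeps the total running time polynomial.
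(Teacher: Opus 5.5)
Your Step~1 and the submodularity of $f_\lambda$ are fine. The gap is that the proposal never closes the central step. You worry that a violation could lie ``strictly above the hull'' and then assert that some hull point $(s,v)$ satisfies $s\ge\lceil (v+1)q\rceil$. Your only support is an appeal to maximal sets and to monotonicity in $\ell$, and that does not prove the claim. The missing idea is that you need no cardinality constraint, no per-$\ell$ subproblems and no envelope computation, because the violation condition is itself one linear inequality in $(|S|,v(S))$, where $v(S):=|X\cap A_r(S)|$.

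\begin{itemize}
\item If $|S|\ge \ell q$ and $v(S)\le \ell-1$, then $v(S)-|S|/q\le -1$.
\item Conversely, suppose $v(S)-|S|/q\le -1$ for some $S\subseteq N\cap B(c,r)$, and set $\ell:=\lfloor |S|/q\rfloor$. Then $v(S)\le |S|/q-1<\ell$, so integrality gives $v(S)\le \ell-1$. Moreover $1\le \ell\le k$, since $v(S)\ge 0$ and $|S|\le n$.
\end{itemize}

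Hence $X$ violates mPJR+ if and only if $\min_{S\subseteq N\cap B(c,r)} f_{1/q}(S)\le -1$ for some $c\in M\setminus X$ and some $r\in\{d(i,c): i\in N\}$. That is one unconstrained submodular minimization per pair $(c,r)$, so $O(mn)$ in total, which is exactly the paper's argument. This also shows why your hull claim is true: the minimizer of the linear functional $v-s/q$ is a hull vertex. Once you see this, though, the hull machinery and the Dinkelbach/Newton search are superfluous.

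Your fallback becomes a valid proof only as a black-box citation. You would need three facts: mPJR+ holds if and only if $X$ satisfies PJR+ in every $\mathcal I_r$; only $O(mn)$ distinct instances $\mathcal I_r$ arise; and PJR+ is polynomial-time verifiable by \citet{brill2023robust}. As written, ``restricting to the agent set $N_{c,r}$'' would change the quota from $n/k$ to $|N_{c,r}|/k$. You must keep $q=n/k$, or run the check on the full instance $\mathcal I_r$. Even then, the fallback defers all the content to the citation instead of exhibiting the reduction above.
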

\begin{proof}[Proof sketch] 
Fix $c \in M \setminus X$ and $r \geq 0$, and consider agents in $B(c,r) \cap N$. For $S \subseteq B(c,r) \cap N$, define $f_{c,r}(S) := |X \cap A_r(S)| - |S|/q$. The function $f_{c,r}$ is submodular (as $S \mapsto |X \cap A_r(S)|$ is a coverage function). An mPJR+ violation exists at $(c,r)$ if and only if $\min_S f_{c,r}(S) \leq -1$. Since only radii in $\{d(i,c) : i \in N\}$ need to be checked, verification reduces to $O(mn)$ submodular minimizations, each solvable in polynomial time. See Appendix~\ref{proof of pjrplus-check} for the full proof.
\end{proof}

\subsection{Complexity of mPJR+ Verification}\label{sec:limits}
Together, Propositions~\ref{prop:sear-pjrplus} and~\ref{prop:pjrplus-check} establish mPJR+ as a theoretically appealing alternative to mPJR. 
However, the time complexity of the verification procedure presented in the proof of Proposition~\ref{prop:pjrplus-check} is far too high to be practical.


\begin{remark}	\label{rmk:mpjrplus-check-complexity}
In the worst case, the construction in the proof of Proposition~\ref{prop:pjrplus-check} requires solving a submodular minimization problem $\mathcal{O}(m n)$ times. 
Thus, even using the state-of-the-art algorithms for submodular minimization results in an $\mathcal{O}(m n^4 k \log{nk} + m n^4 \log^{O(1)}{nk})$ weakly polynomial time algorithm or an $\mathcal{O}(m n^5 k \log^2{n} + m n^5 \log^{O(1)}{n})$ strongly polynomial time algorithm \citep{lee2015faster}.\footnote{\citet{lee2015faster} give a strongly polytime algorithm which runs in time $\mathcal{O}(EO\cdot n^3\log^2n + n^4\log n)$ and a weakly polytime algorithm which runs in time $\mathcal{O}(n^2 \log nM \cdot EO + n^3\log nM)$, where $n$ is the ground set over which subsets are taken ($N_c$ in our case), $M$ is the largest absolute value of the submodular function ($k$ in our case), and $EO$ is the time for one evaluation-oracle call. In our case, we can precompute, for each $n\cdot m$ radii of interest, the set of agents within that radius of each candidate, but we will still need to look for an agent in $N'$ in each of these lists for each candidate in $X$, resulting in an evaluation-oracle time complexity of $\mathcal{O}(k\cdot n),$ and yielding the bounds given in \Cref{rmk:mpjrplus-check-complexity}.}
\end{remark}

A natural approach toward a more efficient algorithm is to avoid submodular minimization and instead decompose the verification problem, by fixing the representation level $\ell$ and looking for violations of mPJR+ that are witnessed by cohesive groups of size $\ell\cdot q$ which are close to fewer than $\ell$ selected centers. Indeed, for the related (but more demanding) concept of EJR+ \citep{brill2023robust}, this approach leads to a simple polynomial-time verification algorithm (we note, however, that mEJR+ --- the metric variant of EJR+ ---
is known to be unsatisfiable\footnote{More precisely, mEJR is known to be unsatisfiable \citep{aziz2024proportionally}, and EJR+ is a strengthening of EJR; accordingly, mEJR+ is a strenthening of mEJR and is therefore unsatisfiable as well.}, and hence mEJR+ does not fulfill our criteria of a `good' metric proportionality axiom). However, it turns out that this strategy does not work for (m)PJR+: 
strikingly, we will now prove that the problem of verifying the fixed-$\ell$ version of (m)PJR+ is coNP-complete. 

 Formally, given a metric space $({\mathcal U},d)$, an instance $(N, M, k)$ of centroid clustering in $({\mathcal U},d)$, a size-$k$ center selection
    $X\subseteq M$, and an integer $\ell\in[k]$, we say that $X$ satisfies
    {\em fixed-$\ell$ mPJR+} if for every unselected center $c\in M\setminus X$, 
    every group of agents $S\subseteq N$ with $|S|\ge \ell\cdot q$, and $r(c, S) =\max_{i\in S}d(i,c)$, we have
        $|X\cap A_{r(c, S)}(S)|\ge \ell.$
    
\begin{restatable}{proposition}{propMpjrplusFixedell}
\label{prop:mpjrplus_fixedell}
    Given a metric space $({\mathcal U},d)$, an instance $(N, M, k)$ of centroid clustering in $({\mathcal U},d)$, a size-$k$ center selection
    $X\subseteq M$, and an integer $\ell\in[k]$, 
    deciding
    whether $X$ satisfies fixed-$\ell$ mPJR+ is coNP-complete. 
\end{restatable}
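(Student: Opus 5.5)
The plan is to show membership in coNP directly, and to prove hardness by a reduction from \textsc{Clique}. The reduction first builds an approval instance and then lifts it to the metric setting with Lemma~\ref{lem:approval-metric}.

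\textbf{Membership.} A violation of fixed-$\ell$ mPJR+ is certified by a pair $(c,S)$ with $c\in M\setminus X$ and $|S|\ge \ell q$. Given such a pair, we compute $r(c,S)=\max_{i\in S}d(i,c)$ and the count $|X\cap A_{r(c,S)}(S)|$ in polynomial time, and check that this count is less than $\ell$. Hence the complement of the problem is in NP.

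\textbf{Hardness in the approval setting.} First consider the analogous fixed-$\ell$ PJR+ problem for approval ballots. A violation there is an unselected $c$ and a set $S$ of at least $\ell q$ agents, all approving $c$, whose approved selected candidates together number at most $\ell-1$. Equivalently, there is a set $T\subseteq X$ with $|T|\le \ell-1$ such that at least $\ell q$ agents approve $c$ and approve no member of $X\setminus T$.

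Now take a \textsc{Clique} instance $(G=(V,E),t)$ with $t\ge 3$, and build the approval instance as follows.
\begin{itemize}
\item There is one selected candidate per vertex, so $V\subseteq X$.
\item There is one unselected candidate $c$.
\item There is one agent per edge $uv$, approving $\{u,v,c\}$.
\item There are $D$ dummy agents approving only $c$; these can always join $S$ at no cost.
\item There are $P$ dummy agents approving nothing, used to adjust $n$.
\item There are extra selected candidates that nobody approves, used to adjust $k$.
\end{itemize}
Set $\ell=t+1$. A set $T$ of at most $t$ vertices lets us include in $S$ exactly the edge-agents inside $T$, plus all $D$ dummies. Sets of fewer than $t$ vertices, or non-cliques on $t$ vertices, induce fewer than $\binom t2$ edges. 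Therefore a violation exists if and only if $G$ has a $t$-clique, provided we choose $D,P,k$ so that
\[
\binom t2-1 \;<\; \ell\cdot\frac nk - D \;\le\; \binom t2 .
\]
For example, fix $k\ge |V|+t+1$, then choose $P$ to make $\ell n/k$ land in a suitable range, and finally choose $D$.

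The main obstacle is this arithmetic. We need integers $D,P$ so that the fractional quota $\ell n/k$, which itself depends on $D$ and $P$, lands in the right half-open window. Since $n=|E|+D+P$, increasing $P$ by $k$ raises $\ell n/k$ by exactly $\ell$. Increasing $D$ by one raises the threshold $\ell n/k - D$ by $\ell/k - 1<0$. Fine steps of $P$ move it by $\ell/k<1$. Together these give enough granularity, all with polynomially bounded values, but the bookkeeping needs care.

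\textbf{Lifting to the metric setting.} Apply Lemma~\ref{lem:approval-metric} to this approval instance, using the shortest-path metric with approval edges of weight $1$ and disapproval edges of weight $2$.
\begin{itemize}
\item If $S$ consists only of agents approving $c$, then $r(c,S)=1\in[1,2)$. By the lemma, $A_1(S)$ is the union of the approval sets, so metric violations and approval violations coincide.
\item If $S$ contains an agent not approving $c$, then $r(c,S)\ge 2$. Every agent is within distance $2$ of every candidate, so $A_{r}(S)\supseteq X$, and $|X|=k\ge\ell$ rules out a violation.
\end{itemize}
Hence $X$ violates fixed-$\ell$ mPJR+ if and only if $G$ has a $t$-clique, which completes the coNP-hardness.
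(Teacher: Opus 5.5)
Your proof is correct, and it has the same architecture as the paper's. Membership is shown via the certificate $(c,S)$. Hardness is first established in the approval setting and then transferred through Lemma~\ref{lem:approval-metric}, using that $r(c,S)\in\{1,2\}$ and that $r(c,S)=2$ forces $A_2(S)=M$.

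The real difference is the source problem. The paper reduces from \textsc{Balanced Biclique}. It first pads the graph with isolated and universal vertices so that $|L|=|R|=2t-1$. Voters are left vertices, committee members are right vertices approved exactly by non-neighbours, and $n=k=2t-1$. Then $q=1$, and a violation at $\ell=t$ is literally a $t\times t$ biclique, so no quota arithmetic is needed.

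You reduce from \textsc{Clique}, with edge-agents approving their endpoints. This is a more familiar gadget, but it forces you to calibrate the fractional threshold $\ell n/k$ into the window $(\binom t2-1,\binom t2]$. That calibration is easier than you suggest, and the $D$ dummies are unnecessary:
\begin{itemize}
\item Take $D=0$ and choose $k>t+1$ with $(t+1)|E|/k\le\binom t2$, e.g.\ $k\ge\max\{|V|+t+1,(t+1)|E|\}$.
\item Increase $P$ one at a time. Each step raises $\ell n/k$ by $\ell/k<1$.
\item So the first $P$ that pushes $\ell n/k$ above $\binom t2-1$ lands in the window, and this $P$ is polynomially bounded.
\end{itemize}

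What your route buys is that $\ell=t+1$ depends only on the clique size. The reduction is therefore parameter-preserving in $\ell$, and it shows that fixed-$\ell$ (m)PJR+ verification is co-W[1]-hard parameterized by $\ell$. This complements the $k^{O(\ell)}$ algorithm obtained by enumerating only sets $Y$ with $|Y|\le\ell-1$ in Proposition~\ref{prop:direct-mpjrplus-verifier}. The paper's padding makes $\ell$ grow with the graph, so its reduction gives no such parameterized consequence; in exchange, it avoids all threshold bookkeeping.
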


\begin{proof}[Proof sketch]
    Membership in coNP is immediate:
    a violation of fixed-$\ell$ mPJR+ is certified by an unselected center $c\in M\setminus X$ and a coalition of agents $S\subseteq N$.
    %
    It remains to prove coNP-hardness.
    We consider the complement of our problem, i.e., deciding whether $X$ violates fixed-$\ell$ mPJR+, and prove it to be NP-hard.
    
    We proceed by first establishing an analogous result in the setting of multiwinner voting with approval ballots, and then employ Lemma~\ref{lem:approval-metric} to port this result to the metric setting.
    The approval-based result, given in Appendix~\ref{app:discrete_hardness}, uses a reduction from \textsc{Balanced Biclique} \citep{GareyJohnson1979}. 
    Given a bipartite graph $G = (L, R, E)$ and a target biclique size~$t$, we construct an approval instance in which voters correspond to vertices in $L$, candidates correspond to vertices in $R$ plus one additional unselected candidate $z$, voter $v_u$ approves candidate $c_w$ if and only if $\{u,w\} \notin E$, and $z$ is approved by all voters. 
    The committee is set to $X = \{c_w : w \in R\}$ and the representation level to $\ell = t$. 
    Under this construction, a coalition $S$ of size $t$ whose members collectively approve fewer than $t$ committee members (and approve $z$) corresponds exactly to a set of $t$ vertices in $L$ that are \emph{all} adjacent to some common set of $t$ vertices in $R$ --- that is, a balanced biclique in $G$.
\end{proof}

As noted in the proof sketch above, our proof proceeds by deriving a hardness result for the setting of multiwinner voting with approval ballots. Since we are primarily interested in the metric setting, we defer the formal statement of this result to Appendix~\ref{app:discrete_hardness}, but note that it may be of independent interest. 

\subsection{A Workaround for Small $k$}\label{sec:fpt}

On the positive side, there is a simple algorithm for verifying mPJR+ whose running time is exponential in $k$, but scales gracefully with $n$ and $m$.

\begin{proposition}\label{prop:direct-mpjrplus-verifier}
Let $X \subseteq M$ be a center selection of size $k$. For every unselected center $c \in M \setminus X$, every radius $r \in \{d(i, c) \mid i \in N\}$, and every subset $Y \subsetneq X$, define
\[
S(c, r, Y) := \{i \in N \mid d(i, c) \le r \text{ and } d(i, x) > r \text{ for all } x \in X \setminus Y\}.
\]
Then $X$ violates mPJR+ if and only if there exist a triple $(c, r, Y)$ with $|S(c, r, Y)| \ge (|Y| + 1)\cdot q$. This condition can be verified in time $O(mn\log n\cdot 2^k)$. 
\end{proposition}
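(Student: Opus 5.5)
We prove the two directions of the characterization separately and then account for the running time.

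\emph{Soundness (a large triple gives a violation).} Suppose some triple $(c,r,Y)$ with $Y\subsetneq X$ satisfies $|S(c,r,Y)|\ge (|Y|+1)q$. Set $\ell:=|Y|+1$; since $Y\subsetneq X$ we have $\ell\in[k]$. Let $S:=S(c,r,Y)$. Every $i\in S$ has $d(i,c)\le r$, so $r(c,S)\le r$. Every $x\in X\setminus Y$ is at distance more than $r\ge r(c,S)$ from all members of $S$. Hence $X\cap A_{r(c,S)}(S)\subseteq Y$, which gives $|X\cap A_{r(c,S)}(S)|\le \ell-1<\ell$. Together with $|S|\ge \ell q$ and $c\notin X$, the tuple $(\ell,c,S)$ violates \Cref{def:mpjrplus} with $\gamma=1$.

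\emph{Completeness (a violation gives a large triple).} Suppose $(\ell,c,S)$ witnesses a violation, and put $r:=r(c,S)$. By definition $r$ is a maximum over distances from agents to $c$, so it lies in $\{d(i,c)\mid i\in N\}$. Let $Y:=X\cap A_r(S)$. The violation gives $|Y|\le \ell-1<k$, so $Y\subsetneq X$.

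Take any $i\in S$. Then $d(i,c)\le r$. For any $x\in X\setminus Y$, we have $x\notin A_r(S)$, so $d(i,x)>r$. Thus $S\subseteq S(c,r,Y)$, and
\[
|S(c,r,Y)|\ge |S|\ge \ell q\ge (|Y|+1)q .
\]
This completes the equivalence.

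\emph{Running time.} The idea is to sort once and then count each set $S(c,r,Y)$ incrementally as $r$ sweeps upward.

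1. For each agent $i$ and each $x\in X$ the distance $d(i,x)$ is an $O(1)$ lookup.
2. Fix $c\in M\setminus X$ and sort $N$ by $d(\cdot,c)$. This costs $O(n\log n)$ per candidate.
3. Fix $Y\subsetneq X$ and sweep $r$ upward over the sorted distances. When an agent $i$ enters $B(c,r)$, record the threshold $t_i:=\min_{x\in X\setminus Y} d(i,x)$. Agent $i$ belongs to $S(c,r,Y)$ exactly when $d(i,c)\le r<t_i$.
4. As $r$ grows, agents leave $S(c,r,Y)$ once $r\ge t_i$. We track these departures with a min-heap keyed on $t_i$, at $O(\log n)$ per operation. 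This gives all counts $|S(c,r,Y)|$ for the $n$ radii in $O(n\log n)$ time per pair $(c,Y)$.
5. The threshold $t_i$ must be available in $O(1)$ amortized time. For fixed $i$, precompute $\min_{x\in X\setminus Y}d(i,x)$ for all $Y$ by a subset DP over $2^k$ masks, each step removing one element. This costs $O(n2^k)$ in total and is independent of $c$.
6. Summing over $m$ candidates and $2^k$ subsets gives $O(mn\log n\cdot 2^k)$.

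The main technical point is this bookkeeping: the entry and exit events must be handled so that each $(c,Y)$ pair costs $O(n\log n)$, and the $k$-factor from evaluating minima over $X\setminus Y$ must be absorbed by the precomputation. If absorbing it proves awkward, a $2^k$ DP per agent, or accepting an extra $k$ factor, are the fallbacks.
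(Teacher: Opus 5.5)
Your proof is correct and follows the paper's argument essentially verbatim for both directions of the characterization, and your running-time argument is the same sweep over the half-open intervals $[d(i,c),t_i)$ (the paper's $[\ell_i,u_i)$). The only difference is bookkeeping: the paper makes $Y$ the outermost loop, so computing the thresholds costs $O(nk)$ per $Y$, which is absorbed into the bound because $k\le m$. It then sorts the $2n$ endpoints for each $c$, placing right endpoints before left ones on ties, instead of using your per-$c$ presorting, heap, and subset DP.
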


\begin{proof}
\emph{($\Rightarrow$)} Suppose $X$ violates mPJR+, witnessed by $c \in M \setminus X$, $S \subseteq N$, and $\ell \in [k]$ with $|S| \ge \ell\cdot q$ and $|X \cap A_r(S)| < \ell$ for $r := \max_{i \in S} d(i, c)$. Set $Y := X \cap A_r(S)$; then $|Y| < \ell$, so $\ell \ge |Y| + 1$, and $r = d(i, c)$ for some $i \in S$, so $r \in \{d(i, c) : i \in N\}$. We claim that $S \subseteq S(c, r, Y)$: indeed, for every $i \in S$ we have $d(i, c) \le r$, and for every $x \in X \setminus Y$ we have $x \notin A_r(S)$, so $d(i, x) > r$. Therefore $|S(c, r, Y)| \ge |S| \ge \ell\cdot q \ge (|Y| + 1)\cdot q$.

\emph{($\Leftarrow$)} Conversely, suppose $|S(c, r, Y)| \ge (|Y| + 1)\cdot q$ for some $c, r, Y$. Set $S' := S(c, r, Y)$, $\ell := |Y| + 1 \in [k]$ (since $Y \subsetneq X$), and $r' := \max_{i \in S'} d(i, c) \le r$. For every $x \in X \setminus Y$ and every $i \in S'$ we have $d(i, x) > r \ge r'$, so $X \cap A_{r'}(S') \subseteq Y$ and thus $|X \cap A_{r'}(S')| \le |Y| < \ell$. Combined with $|S'| \ge \ell\cdot q$, this witnesses an mPJR+ violation at $(c, S', \ell)$.

This characterization immediately implies a bound of $O(m \cdot n \cdot 2^k \cdot nk) = O(m n^2 k\, 2^k)$ on the running time. Indeed, there are at most $m$ choices of $c$, at most $n$ relevant radii per $c$, and $2^k$ choices of $Y \subsetneq X$. For each triple, $|S(c, r, Y)|$ is computed by scanning $n$ agents and checking distances to at most $k$ selected centers, in $O(nk)$ time. 

To obtain the stated bound on the running time, we use a slightly different approach (see Algorithm~\ref{alg:verify-mpjr-smallk} for the pseudocode). We loop over all sets $Y\subseteq X$ ($2^k$ possibilities). For each set $Y\subseteq X$, we go over all agents $i\in N$, compute $u_i=\min_{c\in X\setminus Y} d(i, c)$ (time $O(nk)=O(nm)$ in total), and then loop over all candidates $c\in M\setminus X$ ($m$ options). Within this inner loop, for each $i\in N$ we set $\ell_i=d(i, c)$; then, checking if our condition holds for the given $Y$ and $c$ and some $r\in\mathbb R$ reduces to deciding if there is some $r\in\mathbb R$ that belongs to at least $(|Y|+1)\cdot q$ segments in the collection $\{[\ell_i, u_i)\mid i\in N\}$: indeed,  $r\in[\ell_i, u_i)$ for some $i\in N$ if and only if $d(i, c)\le r$, but $d(i, j)>r$ for all $j\in X\setminus Y$.

The problem of finding the maximum intersection of $n$ intervals on the real line admits an $O(n\log n)$ algorithm \citep{KT}: one can sort all $2n$ endpoints, initialize a counter at 0 and sweep through the sorted list, incrementing the counter whenever we encounter a left endpoint and decrementing it whenever we encounter a right endpoint. Since the intervals in our collections are half-closed, whenever a point $r$ is the right endpoint of one interval and the left endpoint of another interval, i.e., $r=\ell_i=u_j$ for some $i, j\in N$, we need to decrement the counter at $r$ before we increment it (see the pseudocode). 
\end{proof}

\begin{algorithm}[h]
\caption{Verifying mPJR+ in time $O(mn\log n\cdot 2^k)$}
\label{alg:verify-mpjr-smallk}
\KwIn{An instance of centroid clustering $(N, M, k)$ in a metric space $(\mathcal U,d)$ with $|N|=n$, $|M|=m$, and a center selection $X\subseteq M$}
\KwOut{\textsc{True} if $X$ satisfies mPJR+; otherwise \textsc{False}}

\ForEach{$Y\subsetneq X$}{
    \ForEach{$i\in N$}{

        $u_i\gets +\infty$

        \ForEach{$j\in X\setminus Y$}{
            $u_i\gets \min\{u_i, d(i, j)\}$ 
        }

    }
    
    \ForEach{$c \in M \setminus X$}{
    
    $\ell_i\gets d(i, c)$
    
    Create a multiset $R\gets \{(\ell_i, \textsf{L}), (u_i, \textsf{U})\mid i\in N, \ell_i<u_i\}$
    
    Sort the pairs in $R$ in non-decreasing order of the first element, breaking ties so that whenever $R$ contains both $(r, \textsf{L})$
    and $(r, \textsf{U})$ for some $r$, the pair $(r, \textsf{U})$
    comes before $(r, \textsf{L})$ in the sorted order
     
    Initialize $s \gets 0$, $s_{\max}=0$
    
    \ForEach{$(r, \textsf{Z})\in R$ in the sorted order}{
        \If{$\textsf{Z}=\textsf{L}$}{
        
        $s\gets s+1$

        \If{$s_{\max}<s$}{$s_{\max}\gets s$}
        }
        \If{$\textsf{Z}=\textsf{U}$}{

        $s\gets s-1$
        
        }
    }
      
    \If{$s_{\max}\ge (|Y|+1)\cdot n/k$}{\Return \textsc{False}}
    }

}

\Return \textsc{True}\;
\end{algorithm}

\begin{remark}
    We can modify Algorithm~\ref{alg:verify-mpjr-smallk} to first create $m$ lists of agents, where in the $j$-th list the agents are sorted by distance from the $j$-th candidate, and then, when building a sorted list of endpoints of $2n$ segments for a given $Y$ and $c$, merge the $k+1$ sorted lists that corresponds to candidates in $Y\cup\{c\}$ and select the relevant elements from the merged list, instead of sorting the $2n$ elements from scratch. The running time of the modified algorithm is 
    $O(mn\log n + mnk\log k\cdot 2^k)$; for $k\log k  < \log n$, this offers an improvement over Algorithm~\ref{alg:verify-mpjr-smallk}.
\end{remark}

\begin{remark}
   It is straightforward to adapt Algorithm~\ref{alg:verify-mpjr-smallk} to the approval setting, i.e., to the problem of verifying whether a given committee satisfies PJR+. In the approval setting, for a given $Y\subsetneq X$ we only need to identify the voters that do not approve any candidate in $X\setminus Y$, and then, for each $c\in M\setminus X$, check how many of these voters approve $c$. Thus, the running time of the algorithm improves to $O(2^k(nk+nm))=O(nm\,2^k)$.      
\end{remark}

\begin{remark}
    One can view \Cref{prop:direct-mpjrplus-verifier} as a fixed-parameter tractability result with respect to $k$ (of course, technically speaking, the verification problem for mPJR+ is in FPT for any parameter, as it is, in fact, in P by \Cref{prop:pjrplus-check}). We note that one cannot hope for a similar FPT result for mPJR. This is because hardness of PJR verification is shown by a reduction from {\sc Balanced Biclique}, 
    which maps the problem of deciding whether a bipartite graph admits an $\ell$-by-$\ell$ biclique to the problem of deciding whether a size-$(2\ell-2)$ committee fails PJR \citep{aziz2018complexity}. As {\sc Balanced Biclique} is W[1]-hard with respect to $k$ \citep{lin}, the reduction of \citet{aziz2018complexity} shows that verifying PJR is co-W[1]-hard; this result extends to mPJR via \Cref{lem:approval-metric}. 
    Interestingly, however, even though the proof of \Cref{prop:mpjrplus_fixedell}, too, proceeds by a reduction from {\sc Balanced Biclique}, the reduction in that proof is not parameter-preserving, and, indeed, the problem of deciding if a given committee satisfies fixed-$\ell$ (m)PJR+ is in FPT with respect to $k$. Indeed,  we can modify  Algorithm~\ref{alg:verify-mpjr-smallk} to only consider sets $Y$ of size at most $\ell-1$, and reject if we find some such set $Y$ with $|S(c, r, Y)|\ge \ell$.
\end{remark}

\section{Efficient Verification via Default Coalitions}
\label{sec:sc_mpjr_plus}
We showed in the previous section that mPJR is coNP-complete to verify, and while mPJR+ is polynomial-time verifiable in theory, the reliance on repeated submodular minimization makes verification impractical at scale. \Cref{prop:direct-mpjrplus-verifier} provides a solution that is suitable for small $k$, but not in general.
In this section, we resolve this tension by introducing a new proportional representation criterion, \emph{Default Coalitions mPJR+} (DC-mPJR+), that is both efficiently verifiable and provably close to mPJR+.

The key idea is to replace the search over arbitrary coalitions of agents with a search over a structured family of coalitions, one for each unselected center.
We define these coalitions and the resulting axiom in Section~\ref{sec:canonical}.
While this new formulation sacrifices generality in spirit, we are able to show that the resulting axiom implies a close approximation to mPJR+ in Section~\ref{sec:global}, and give an $O(mn\log n + mnk)$ verification algorithm in Section~\ref{sec:algorithm}.

\subsection{Default Coalitions and the DC-mPJR+ Axiom}\label{sec:canonical}

Recall that verifying mPJR+ is computationally expensive, as one must search over all possible coalitions $S$ that could witness a violation with respect to an unselected center $c$. 
The coalition $S$ can be any subset of agents near $c$, and the violation radius $r(c, S) = \max_{i \in S} d(i,c)$ depends on the choice of~$S$. 

Our approach is to eliminate this freedom by fixing, for each unselected center $c$ and each representation level $\ell$, a single coalition to check. 
Specifically, we focus on the coalition most naturally forming around the candidate $c$: the $\ell\cdot q$ agents closest to $c$.

\begin{definition}[Default Coalitions]
    For any unselected center $c \in M \setminus X$ and representation level $\ell \in [k]$, let $r_{c,\ell}$ denote the minimum radius required for a closed ball centered at $c$ to capture enough agents to justify $\ell$ representatives, i.e.,
    \begin{equation}
        r_{c,\ell} := \min \{ r \ge 0 \mid |B(c, r) \cap N| \ge \ell \cdot q \}.
    \end{equation}
    We refer to the set of agents captured within this ball, denoted by $N_{c,\ell}$, as the {\em default coalition} for candidate $c$ at level $\ell$, i.e.,
    \begin{equation}
        N_{c,\ell} := B(c, r_{c,\ell}) \cap N.
    \end{equation}
\end{definition}
%
Intuitively, among all groups of $\ell\cdot q$ or more agents around $c$, the default coalition $N_{c,\ell}$ 
is likely the densest and hence has the strongest claim to representation near $c$.

We now define our proportional representation criterion by restricting the mPJR+ check to default coalitions only.
\begin{definition}[$\gamma$-Default Coalitions mPJR+] \label{def:cc-mpjrplus}
    Given an instance $(N, M, k)$ of centroid clustering, a center selection $X \subseteq M$ is said to satisfy {\em $\gamma$-Default Coalitions mPJR+ ($\gamma$-DC-mPJR+)} for $\gamma\geq 1$ if there is no unselected center $c \in M \setminus X$ and representation level $\ell \in [k]$ such that
    \begin{equation}
        |X \cap A_{\gamma \cdot r_{c,\ell}}(N_{c,\ell})| < \ell.
    \end{equation}
    When $\gamma=1$, we simply write DC-mPJR+.
\end{definition}

Since $\gamma$-DC-mPJR+ only checks the $\gamma$-mPJR+ condition for a specific coalition $S = N_{c,\ell}$ at each pair $(c, \ell)$, it is formally a weakening of $\gamma$-mPJR+. 
Thus, any center selection satisfying $\gamma$-mPJR+ also satisfies $\gamma$-DC-mPJR+, and in particular, SEAR satisfies DC-mPJR+ by \Cref{prop:sear-pjrplus}.

The total number of pairs $(c, \ell)$ to check is at most $mk$, compared to the exponentially many coalitions required by mPJR or the $O(mn)$ submodular minimizations required by mPJR+. 
This compression of the search space is what makes efficient verification possible;
we will make this argument formal in \Cref{sec:algorithm}.

\paragraph{Relationship to mPJR.}
Since DC-mPJR+ restricts the mPJR+ condition to default coalitions, it follows that mPJR+ implies DC-mPJR+, in the sense that if a center selection satisfies mPJR+, it also satisfies DC-mPJR+. Moreover, in \Cref{sec:mpjr_plus}
we have argued that mPJR+ implies mPJR. It is natural to ask, then, what is the relationship between DC-mPJR+ and mPJR. It turns out that neither implies the other.

\begin{proposition}\label{prop:dc-mpjr-incomparable}
DC-mPJR+ and mPJR are incomparable.
\end{proposition}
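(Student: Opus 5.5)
The plan is to prove \Cref{prop:dc-mpjr-incomparable} with two explicit small instances, one for each direction of non-implication. In both, the metric is the shortest-path metric of a small weighted graph, or points on a line. This makes the triangle inequality automatic, and each claim reduces to a finite check over radii $r\in\{d(i,c)\}$ and small coalitions.

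\emph{DC-mPJR+ does not imply mPJR.} The idea is that an mPJR violation can come from a group $S$ that is not the default coalition of any unselected center: the default coalition around the anchoring candidate also captures an agent that is close to a selected center. Take $n=4$, $k=2$, $q=2$ on the real line, with candidates $x_1$ at $0$, $c$ at $2$ and $x_2$ at $100$, agents $a$ at $1$, $b,b'$ at $3$, $d$ at $100$, and $X=\{x_1,x_2\}$.
\begin{itemize}
\item For $\ell=1$, the group $S=\{b,b'\}$ is $1$-cohesive at radius $1$ (via $c$), yet $A_1(S)\cap X=\emptyset$. Hence mPJR fails.
\item The only unselected center is $c$. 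We have $r_{c,1}=1$ and $N_{c,1}=\{a,b,b'\}$, whose approval set at radius $1$ contains $x_1$ because of $a$.
\item At level $2$ the default coalition is all of $N$ at radius $98$, and its approval set contains both $x_1$ and $x_2$.
\end{itemize}
So $X$ satisfies DC-mPJR+.

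\emph{mPJR does not imply DC-mPJR+.} For $\ell=1$ this direction is impossible: a DC violation at $(c,1)$ is itself an mPJR violation, since $c$ is a common approved candidate of the ball $N_{c,1}$. So the violation must occur at $\ell=2$. The default coalition $N_{c,2}$ should be cohesive only through $c$ at radius $r_{c,2}$, so that it is $1$-cohesive but not $2$-cohesive. It should also be covered by only one selected center at that radius. At the same time, every subgroup of size $q$ must already be served.

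I would take $k=3$, $q=2$ and the following agents:
\begin{itemize}
\item $a_1,a_2,a_3$ located at a selected center $x_1$;
\item one agent $b$;
\item two agents $e_1,e_2$ located far away, at selected centers $x_3$ (and $x_2'$ placed near them if needed).
\end{itemize}
The distances are $d(a_j,c)=d(b,c)=5$, $d(b,x_1)=10$, and a selected center $x_2$ with $d(b,x_2)=7$ and $d(a_j,x_2)=12$. Then $r_{c,2}=5$, $N_{c,2}=\{a_1,a_2,a_3,b\}$, and $A_5(N_{c,2})\cap X=\{x_1\}$, which violates DC-mPJR+.

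For mPJR, I would check the groups as follows.
\begin{itemize}
\item Every group containing some $a_j$ or some $e_j$ has a selected center at distance $0$.
\item Since $b$ is the only other agent, every group of size $\ge q=2$ contains such an agent, so all $\ell=1$ constraints hold.
\item A $2$-cohesive group of size $4$ must contain an $a_j$ (there are only three non-$a$ agents). If it contains $b$, it has a second common candidate only once $r\ge7$, and at that point $x_2$ is also in its approval set.
\item Groups of $a$'s alone have $x_1$ and further candidates only at larger radii, where the $a$'s themselves see $x_2$.
\item The $\ell=3$ constraint needs all agents to be cohesive at a huge radius, and at that radius all of $X$ is covered.
\end{itemize}

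The main obstacle is this second verification. One must exhaustively confirm that no $\ell$-cohesive group of size $\ell q$ is under-represented at any radius, while still realizing all the distances in a genuine metric. I would first fix a weighted graph realizing these distances, with the far cluster attached by long edges, and then tabulate the critical radii. If some check fails, I would adjust the placement of $x_2$ (keeping $5<d(b,x_2)$ strictly below the radius at which $\{a_j\}\cup\{b\}$ becomes $2$-cohesive) and pad with extra far-away agents and centers to absorb the remaining quota.
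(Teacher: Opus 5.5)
Your proof is correct. It follows the paper's overall strategy of one explicit counterexample per direction, but uses different instances.

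\textbf{How the paper's instances work.} The paper uses two instances with $n=6$ and $k=3$ in which every agent--candidate distance is $1$ or $2$. These are approval instances lifted as in \Cref{lem:approval-metric}, so every check reduces to the band $1\le r<2$: nothing is approved below $1$, and everything is covered at $2$. For the first direction, the paper exhibits an $\ell=2$ mPJR violation. It places the unselected candidates at distance $1$ from every agent, so every default coalition equals $N$.

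\textbf{How yours differ.} Your line instance for the first direction is smaller ($n=4$, $k=2$) and already fails mPJR at $\ell=1$. The pair $\{b,b'\}$ is unrepresented, but $N_{c,1}$ necessarily absorbs $a$, who sees $x_1$. For the second direction, your instance is essentially the paper's Instance~2: three agents see $c$ and $x_1$, one agent sees only $c$, and the remaining agents sit elsewhere. Because your distances are graded, you must also ensure that $\{a_1,a_2,a_3,b\}$ becomes $2$-cohesive only after $b$ reaches $x_2$ (at $r\ge 10$, versus $d(b,x_2)=7$). The two-valued metric gives the paper this for free. Your remark that this direction cannot be witnessed at $\ell=1$ is correct and is not stated in the paper.

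\textbf{Cleanup needed.}
\begin{itemize}
\item Drop the hedges and fix the candidate set as $M=\{c,x_1,x_2,x_3\}$. There should be no $x_2'$: it could not be selected anyway, since $|X|=k=3$.
\item Put both $e_j$ at $x_3$.
\item Realize your stated distances explicitly. One choice is the shortest-path metric of the graph with edges $x_1c$ and $cb$ of weight $5$, edges $bx_2$ and $cx_2$ of weight $7$, and edge $x_2x_3$ of weight $100$. This gives exactly $d(b,x_1)=10$ and $d(a_j,x_2)=12$.
\item Replace your bullet on ``groups of $a$'s alone''. As stated it is inaccurate: $\{a_1,a_2,a_3\}$ is $2$-cohesive at $r=5$ via $\{x_1,c\}$ without seeing $x_2$. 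It is harmless only because such groups have size $3<4$.
\item What the $\ell=2$ case actually needs in its place is this: every $4$-subset avoiding $b$ contains both some $a_j$ and some $e_j$. Such a subset is therefore covered by $x_1$ and $x_3$ at every radius.
\end{itemize}
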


\begin{proof}
We construct two metric instances. In both, $N = \{1, \dots, 6\}$, $k = 3$, $q = n/k = 2$, and all agent--candidate distances are either $1$ or $2$;
to complete the description of each instance, 
we specify the set of candidates $M$, a center selection $X$, and, for each agent $i\in N$, the set $D_i := \{c \in M \mid d(i, c) = 1\}$ (so that $d(i, c)=2$ for all $c\in M\setminus D_i$).

\medskip
\noindent\emph{Instance 1: DC-mPJR+ holds, mPJR fails.}
Let $M = \{a, b, x_1, x_2, x_3\}$, $X = \{x_1, x_2, x_3\}$, and
\[
D_i = \{a, b, x_1\} \text{ for } i \in \{1,2,3,4\},
\quad
D_5 = \{a, b, x_2\},
\quad
D_6 = \{a, b, x_3\}.
\]
\begin{itemize}
\item
\emph{mPJR fails:} Take $S = \{1, 2, 3, 4\}$, $\ell = 2$, $r = 1$. Then $|S| = \ell\cdot q$ and $\bigcap_{i \in S} (B(i, 1) \cap M) = \{a, b, x_1\}$, so $S$ should get $\ell$ representatives at radius $1$. However, $A_1(S) = \{a, b, x_1\}$, so $|X \cap A_1(S)| = 1 < 2$.
\item
\emph{DC-mPJR+ holds:} The unselected candidates are $a$ and $b$, and every agent lies at distance $1$ from both. Hence for each $c \in \{a, b\}$ and each $\ell \in [3]$, we have $r_{c, \ell} = 1$ and $N_{c, \ell} = N$, so $A_1(N_{c, \ell}) = M$ and $|X \cap A_1(N_{c, \ell})| = 3 \ge \ell$.
\end{itemize}

\medskip
\noindent\emph{Instance 2: mPJR holds, DC-mPJR+ fails.}
Let $M = \{z, x_1, x_2, x_3\}$, $X = \{x_1, x_2, x_3\}$, and
\[
D_i = \{z, x_1\} \text{ for } i \in \{1,2,3\},
\quad
D_4 = \{z\},
\quad
D_5 = \{x_2\},
\quad
D_6 = \{x_3\}.
\]
\begin{itemize}
\item
\emph{DC-mPJR+ fails:} The only unselected candidate is $z$. For $\ell = 2$, the radius-$1$ ball around $z$ contains exactly $\{1, 2, 3, 4\}$, so $r_{z, 2} = 1$ and $N_{z, 2} = \{1, 2, 3, 4\}$. But $A_1(N_{z, 2}) = \{z, x_1\}$, giving $|X \cap A_1(N_{z, 2})| = 1 < 2$.
\item
\emph{mPJR holds:} Since all distances are $1$ or $2$, it suffices to consider the radii $r$ with $1\le r<2$: indeed,  for $r < 1$, no agent has any candidate within distance $r$, so there are no $\ell$-cohesive groups for $\ell>0$, whereas for $r \ge 2$ we have $A_r(i)=M$ for all $i\in N$, so $|X \cap A_r(S)| = k \ge \ell$ for every group $S\subseteq M$. For $1 \le r < 2$ we have $B(i, r) \cap M = D_i$ for every $i\in N$. 
Note that for $\ell\ge 2$ there is no group of $\ell\cdot q$ agents that is $\ell$-cohesive at radius $r$. For $\ell=1$, every $\ell$-cohesive group at radius $r$ is a size-$2$ subset of $\{1, 2, 3, 4\}$, and in any such subset at least one agent approves $x_1\in X$ at radius $1$.
Therefore, mPJR holds. 
\end{itemize}
Together, the two instances prove that mPJR and DC-mPJR+ are incomparable.
\end{proof}

\subsection{DC-mPJR+ Implies Approximation to mPJR+}
\label{sec:global}

A natural concern is whether restricting to default coalitions sacrifices too much: could a solution satisfy DC-mPJR+ while badly failing proportional representation for some group of agents? 
The following result shows that this cannot happen: any solution satisfying $\gamma$-DC-mPJR+ also satisfies $(\gamma+2)$-mPJR+.
The proof uses the triangle inequality to relate the distance from an arbitrary agent to a selected center via the default coalition. The additive gap of 2 arises from two intermediate hops: one from the agent to the unselected center $c$, and one from $c$ to a member of the default coalition.

\begin{restatable}{theorem}{thmScMpjrplusImpliesMpjrplus}
\label{prop:sc-mpjrplus-implies-mpjrplus}
If a center selection satisfies $\gamma$-DC-mPJR+ for some $\gamma\geq 1$, then it also satisfies $(\gamma + 2)$-mPJR+.
\end{restatable}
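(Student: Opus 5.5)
We argue by contraposition: assume $X$ violates $(\gamma+2)$-mPJR+, and exhibit a violation of $\gamma$-DC-mPJR+ at the same unselected center and the same level. So fix $c\in M\setminus X$, $\ell\in[k]$ and $S\subseteq N$ with $|S|\ge \ell\cdot q$ and $|X\cap A_{(\gamma+2)r}(S)|<\ell$, where $r:=r(c,S)=\max_{i\in S}d(i,c)$. The natural candidate witness is the pair $(c,\ell)$ with the default coalition $N_{c,\ell}$. We will show $X\cap A_{\gamma r_{c,\ell}}(N_{c,\ell})\subseteq X\cap A_{(\gamma+2)r}(S)$. This gives $|X\cap A_{\gamma r_{c,\ell}}(N_{c,\ell})|<\ell$, which is exactly a $\gamma$-DC-mPJR+ violation.

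The first step is to compare radii. Since $S\subseteq B(c,r)\cap N$ and $|S|\ge \ell q$, the ball $B(c,r)$ already contains at least $\ell q$ agents. By minimality of $r_{c,\ell}$ we therefore get $r_{c,\ell}\le r$. Consequently every $j\in N_{c,\ell}$ satisfies $d(j,c)\le r_{c,\ell}\le r$. The minimum in the definition of $r_{c,\ell}$ is attained because only finitely many distances matter.

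The second step is the triangle-inequality chain. Take $x\in X\cap A_{\gamma r_{c,\ell}}(N_{c,\ell})$, so some $j\in N_{c,\ell}$ has $d(j,x)\le \gamma r_{c,\ell}\le\gamma r$. Pick any agent $i\in S$; $S$ is nonempty since $\ell q>0$. Then
\[
d(i,x)\le d(i,c)+d(c,j)+d(j,x)\le r + r + \gamma r=(\gamma+2)r,
\]
so $x\in A_{(\gamma+2)r}(S)$. This proves the inclusion, and hence the theorem.

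There is no real obstacle here; the only point needing care is the radius comparison $r_{c,\ell}\le r(c,S)$. This comparison is what lets both the $\gamma$-scaled hop and the hop from $c$ to the default-coalition member be bounded in terms of $r(c,S)$. The first hop, from the agent in $S$ to $c$, is bounded by $r(c,S)$ directly from its definition. These two extra hops of length at most $r$ account for the additive $+2$.
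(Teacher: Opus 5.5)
Your proof is correct and follows the same route as the paper's. Both rest on the radius comparison $r_{c,\ell}\le r(c,S)$ and the three-hop triangle inequality $i\to c\to j\to x$; the only difference is cosmetic, since you argue the contrapositive via the inclusion $X\cap A_{\gamma r_{c,\ell}}(N_{c,\ell})\subseteq X\cap A_{(\gamma+2)r(c,S)}(S)$, whereas the paper argues directly by fixing $\ell$ selected centers covered by the default coalition.
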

\begin{proof}
Consider an instance $(N, M, k)$ of centroid clustering in a metric space $({\mathcal U}, d)$, and
a center selection $X\subseteq M$ that satisfies $\gamma$-DC-mPJR+ for some $\gamma\geq 1$. Fix $c\in M\setminus X$ and $\ell\in[k]$. By $\gamma$-DC-mPJR+, we know that $|X\cap A_{\gamma\cdot r_{c,\ell}}(N_{c,\ell})|\geq \ell$. That is, there is a size-$\ell$ subset $X'\subseteq X$ such that for each $x\in X'$ there is at least 
one agent $i\in N_{c,\ell}$ with
$d(i, x)\le \gamma \cdot r_{c,\ell}$.

Now fix an arbitrary group of agents $S\subseteq N$ such that $|S|\geq \ell\cdot q$ and let $r(c, S) =\max_{i\in S} d(i, c)$. 
Note that, by the choice of $N_{c, \ell}$, we have $r_{c, \ell}\le r(c, S)$: indeed, of all groups of size at least $\ell\cdot q$ (including $S$), the group $N_{c, \ell}$ is the `tightest'. 
Let $x$ be an arbitrary center in $X'$ and let $j\in N_{c,\ell}$ be some agent such that $d(x,j)\leq \gamma \cdot r_{c,\ell}$. 

Now consider the distance from an arbitrary agent $i\in S$ to the selected center $x$:
\begin{align*}
    d(i,x) &\leq d(i,c) + d(c,j) + d(j,x) \\
    &\leq r(c, S) + r_{c,\ell} + \gamma\cdot r_{c,\ell} \\
    &\leq (\gamma + 2) \cdot r(c, S).
\end{align*}
Since this inequality holds for every $x\in X'$, it follows that $|X\cap A_{(\gamma + 2)\cdot r(c, S)}(S)| \geq |X'| = \ell$. Since this holds for arbitrary unselected center $c\in M\setminus X$, group of agents $S\subseteq N$, and $\ell\in[k]$, we conclude that $X$ satisfies $(\gamma+2)$-mPJR+.
\end{proof}

\Cref{prop:sc-mpjrplus-implies-mpjrplus} means that restricting to default coalitions costs very little in terms of fairness guarantees, while --- as we are about to show --- yielding a dramatic improvement in verification time complexity.

\subsection{Direct Polynomial-Time Verification}\label{sec:algorithm}

We now give an efficient algorithm for verifying $\gamma$-DC-mPJR+ (Algorithm~\ref{alg:verify-sc-mpjrplus}). 

At a high level, we grow a ball around each unselected center $c$ by processing agents in batches, in order of increasing distance from $c$. 
We maintain, for each selected center $x \in X$, the quantity $\delta(x)=\min_{i\in P} d(x,i)$, which tracks the distance from $x$ to the current prefix $P$ of agents. 
Whenever we update $P$, we compute 
the representation level it deserves, i.e., 
$t=\lfloor |P|/q\rfloor$.
We then compute the number of selected centers that lie at distance at most $\gamma\cdot\max_{i\in P}d(c, i)$ from $P$; 
if this number is smaller than $t$, then we have found a violation. The correctness of the algorithm follows from the fact that each set $N_{c, \ell}$ corresponds to some prefix $P$ considered by the algorithm.

\begin{algorithm}[h]
\caption{Verifying $\gamma$-DC-mPJR+}
\label{alg:verify-sc-mpjrplus}
\KwIn{An instance of centroid clustering $(N, M, k)$ for a metric space $(\mathcal U,d)$ with $|N|=n$, $|M|=m$, a center selection $X\subseteq M$, and parameter $\gamma \ge 1$}
\KwOut{\textsc{True} if $X$ satisfies $\gamma$-DC-mPJR+; otherwise \textsc{False}}


\ForEach{$c \in M \setminus X$}{
    Sort the agents in non-decreasing order of distance from $c$: 
    $d(c,i_1)\le d(c,i_2)\le \cdots \le d(c,i_n)$
    
    Let $\rho_1<\rho_2<\cdots<\rho_s$ be the distinct values among 
    $\{d(c,i):i\in N\}$
    
    Initialize $P \gets \varnothing$
    
    \ForEach{$x\in X$}{
        $\delta(x)\gets +\infty$
    }
    
    
    \For{$j=1$ \KwTo $s$}{
        Add to $P$ all agents $i\in N$ such that $d(c,i)=\rho_j$
        
        \ForEach{newly added agent $i$}{
            \ForEach{$x\in X$}{
                $\delta(x)\gets \min\{\delta(x),d(x,i)\}$
            }
        }
        
        Let $t\gets \left\lfloor |P|\cdot k/n\right\rfloor$
        
            Compute $\mathrm{cov}\gets |\{x\in X:\delta(x)\le \gamma\cdot  \rho_j\}|$
            
            \If{$\mathrm{cov}<t$}{
                \Return \textsc{False}
            }
            
    }
}
\Return \textsc{True}\;
\end{algorithm}

\begin{restatable}{theorem}{thmScMpjrplusCheck}
\label{prop:sc-mpjrplus-check}
Algorithm~\ref{alg:verify-sc-mpjrplus} correctly decides whether a center selection $X$ satisfies $\gamma$-DC-mPJR+. Its running time is $O(mn\log n + mnk)$.
\end{restatable}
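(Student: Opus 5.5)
Correctness comes down to matching the prefixes $P$ examined by Algorithm~\ref{alg:verify-sc-mpjrplus} with the default coalitions $N_{c,\ell}$. Fix $c\in M\setminus X$. For each $j$, the prefix built after iteration $j$ is $P_j=B(c,\rho_j)\cap N$, because all agents at distance exactly $\rho_j$ are added together. Moreover, $\max_{i\in P_j}d(c,i)=\rho_j$. By construction, $\delta(x)=\min_{i\in P_j}d(x,i)$, so $\delta(x)\le\gamma\rho_j$ holds if and only if $x\in A_{\gamma\rho_j}(P_j)$. Hence $\mathrm{cov}=|X\cap A_{\gamma\rho_j}(P_j)|$.

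I will first show that every default coalition is some $P_j$. The quantity $|B(c,r)\cap N|$ is a step function of $r$ that only jumps at the values $\rho_j$. So $r_{c,\ell}$, the least $r$ with at least $\ell q$ agents, equals some $\rho_j$, and then $N_{c,\ell}=P_j$.

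Soundness. Suppose $X$ violates $\gamma$-DC-mPJR+ at a pair $(c,\ell)$. Let $j$ be the index with $\rho_j=r_{c,\ell}$. Then $|P_j|\ge\ell q$, which gives $t=\lfloor |P_j|k/n\rfloor\ge\ell$. Also $\mathrm{cov}<\ell\le t$, so the algorithm returns \textsc{False}.

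Completeness. Suppose instead the algorithm returns \textsc{False} at some $(c,j)$, so $\mathrm{cov}<t$. Then $t\ge 1$, and since $t\le |P_j|k/n\le k$, we have $t\in[k]$; take $\ell:=t$. From $|P_j|\ge \ell q$ we get $r_{c,\ell}\le\rho_j$, so $N_{c,\ell}=P_{j'}$ for some $j'\le j$. By monotonicity, $N_{c,\ell}\subseteq P_j$ and $\gamma r_{c,\ell}\le\gamma\rho_j$, so
\[
A_{\gamma r_{c,\ell}}(N_{c,\ell})\subseteq A_{\gamma\rho_j}(P_j).
\]
It follows that $|X\cap A_{\gamma r_{c,\ell}}(N_{c,\ell})|\le \mathrm{cov}<\ell$, which is a genuine violation. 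This monotonicity step is the one to get right. The failure detected at the larger prefix $P_j$ must transfer to the tighter coalition $N_{c,\ell}$, and it does because shrinking both the group and the radius can only shrink the covered set. I expect this to be the main (though mild) obstacle.

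Running time. For each of the at most $m$ candidates $c$:
\begin{itemize}
\item sorting the agents costs $O(n\log n)$;
\item each agent is added to $P$ exactly once, and each addition updates $k$ values $\delta(x)$, for $O(nk)$ total;
\item recomputing $\mathrm{cov}$ by a scan costs $O(k)$ per distinct radius, and there are $s\le n$ of them, for $O(nk)$ total.
\end{itemize}
Summing over all $c$ gives $O(mn\log n+mnk)$.
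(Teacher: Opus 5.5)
Your proof is correct and follows essentially the same route as the paper. You identify each prefix with $B(c,\rho_j)\cap N$, show that every default coalition $N_{c,\ell}$ arises as such a prefix and that $\mathrm{cov}=|X\cap A_{\gamma\rho_j}(P)|$, and charge $O(n\log n+nk)$ per unselected center. Beyond that, you explicitly prove the converse direction, which the paper's proof leaves implicit: a rejection with $\mathrm{cov}<t$ at a prefix larger than $N_{c,t}$ still certifies a genuine violation, because $A_r(S)$ is monotone in both $r$ and $S$.
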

\begin{proof}
We first prove correctness. According to \Cref{def:cc-mpjrplus}, $X$ satisfies $\gamma$-DC-mPJR+ if and only if for every $c\in M\setminus X$ and every $\ell\in[k]$ it holds that $|X\cap A_{\gamma\cdot r_{c,\ell}}(N_{c,\ell})|\ge \ell$. 
Now fix $c\in M\setminus X$, and consider the iteration of the outer loop in which $c$ is processed. 
The agents are sorted by distance from $c$, and the algorithm scans the distinct radii \(\rho_1<\rho_2<\cdots<\rho_s\). 

Consider step $j$ of the inner {\bf for} loop. It holds that \(P=B(c,\rho_j)\cap N\). Hence, for any $\ell\in [k]$, whenever $|P|\ge \ell\cdot q$ for the first time, we have \(\rho_j=r_{c,\ell} \) and $P=N_{c,\ell}$.
Moreover, for every center $x\in X$, the maintained value $\delta(x)=\min_{i\in P} d(x,i)$ is exactly the distance from $x$ to $P$. Therefore, $x\in A_{\gamma\cdot r_{c,\ell}}(N_{c,\ell})$ if and only if  $\delta(x)\le \gamma\cdot r_{c,\ell} =\gamma\cdot \rho_j$.
Thus, at that point the quantity $|\{x\in X\mid\delta(x)\le \gamma\cdot \rho_j\}|$ computed by the algorithm is precisely $|X\cap A_{\gamma\cdot r_{c,\ell}}(N_{c,\ell})|$. Thus, whenever the algorithm checks whether this quantity is at least $\ell$, it is checking exactly the condition of $\gamma$-DC-mPJR+ for the pair $(c,\ell)$. 

We now analyze the running time. For a fixed center $c\in M\setminus X$, sorting the $n$ agents by distance from $c$ takes $O(n\log n)$ time. During the scan over the distinct radii, each agent enters the set $P$ exactly once. When an agent enters, the algorithm updates $\delta(x)$ for each center $x\in X$, which takes $O(k)$ time. Hence all updates together take $O(nk)$ time for this fixed unselected center.

When $P$ is updated, computing the coverage takes $O(k)$ time, for a total of $O(nk)$ time. Therefore, the total running time per candidate is $O(n\log n + nk)$. There are at most $m$ centers in $M\setminus X$, so the overall running time is $O(mn\log n + mnk)$.
\end{proof}

Finally, we observe that restricting to default coalitions also resolves the fixed-$\ell$ barrier identified in Section~\ref{sec:limits} (see Proposition~\ref{prop:mpjrplus_fixedell}).
Indeed, we can modify Algorithm~\ref{alg:verify-sc-mpjrplus} to only compute the coverage $\mathrm{cov}$ in the first iteration $j$ such that $t\ge \ell$, and return {\sc False} if $\mathrm{cov} < \ell$
and {\sc True} otherwise.

\begin{corollary}\label{cor:fixed-ell-sc-mpjrplus-check}
Given a center selection $X \subseteq M$ and an integer $\ell \in [k]$, it can be verified in polynomial time whether $X$ satisfies the fixed-$\ell$ version of $\gamma$-DC-mPJR+.
\end{corollary}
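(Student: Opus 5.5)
The plan is to obtain the corollary by specializing Algorithm~\ref{alg:verify-sc-mpjrplus} to a single representation level, and to reuse the correctness argument of Theorem~\ref{prop:sc-mpjrplus-check} essentially verbatim. By Definition~\ref{def:cc-mpjrplus}, $X$ satisfies the fixed-$\ell$ version of $\gamma$-DC-mPJR+ if and only if $|X\cap A_{\gamma\cdot r_{c,\ell}}(N_{c,\ell})|\ge \ell$ for every $c\in M\setminus X$. For the given $\ell$, there are only $|M\setminus X|\le m$ pairs $(c,\ell)$ to check, so it suffices to evaluate this condition for each $c$ in polynomial time.

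\textbf{Computing the default coalition.} Fix $c\in M\setminus X$ and sort the agents by distance from $c$. Scan the distinct radii $\rho_1<\dots<\rho_s$ and maintain the prefix $P=B(c,\rho_j)\cap N$, exactly as in Algorithm~\ref{alg:verify-sc-mpjrplus}. As observed in the proof of Theorem~\ref{prop:sc-mpjrplus-check}, the first index $j$ with $|P|\ge \ell\cdot q$ satisfies $\rho_j=r_{c,\ell}$ and $P=N_{c,\ell}$.

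\textbf{Checking coverage.} At that index, compute $\delta(x)=\min_{i\in P}d(x,i)$ for each $x\in X$ and count the centers with $\delta(x)\le \gamma\rho_j$. This count equals $|X\cap A_{\gamma\cdot r_{c,\ell}}(N_{c,\ell})|$. The algorithm outputs \textsc{False} if it is below $\ell$ for some $c$, and \textsc{True} otherwise. The running time is at most $O(mn\log n+mnk)$, which is polynomial.

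\textbf{Edge case.} The only point needing care is ensuring that $r_{c,\ell}$ is well defined. Since $\ell\le k$, we have $\ell\cdot q\le n$, so the full set $N$ reaches the threshold and some prefix always qualifies. Beyond this, the argument is routine; the contrast with Proposition~\ref{prop:mpjrplus_fixedell} comes entirely from the fact that the coalition is fixed rather than quantified over.
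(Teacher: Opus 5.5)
Your proposal is correct and matches the paper's argument: for each unselected $c$, the paper modifies Algorithm~\ref{alg:verify-sc-mpjrplus} to compute the coverage only at the first prefix with $|P|\ge \ell\cdot q$ (equivalently, the first $j$ with $t\ge \ell$) and compares that coverage with $\ell$. Your version also states two points the paper's one-line sketch leaves implicit: \textsc{True} is returned only after every $c\in M\setminus X$ passes, and $r_{c,\ell}$ is well defined because $\ell\cdot q\le n$.
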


This stands in sharp contrast with Proposition~\ref{prop:mpjrplus_fixedell}, which shows that the fixed-$\ell$ version of mPJR+ is coNP-complete. 

To summarize, DC-mPJR+ achieves three properties simultaneously: (1)~it is efficiently verifiable in $O(mn\log n + mnk)$ time, (2)~it implies $(\gamma+2)$-mPJR+, providing an approximate global fairness guarantee, and (3)~it is satisfied by existing algorithms such as SEAR. Together, these features make DC-mPJR+ a practical and theoretically grounded tool for auditing proportional representation in clustering.

\section{Empirical Sanity Check}
\label{sec:empirical-sanity-check}

Since DC-mPJR+ is a relaxation of mPJR+, a natural concern is whether it is too permissive in practice: could it be the case that the default-coalition restriction discards so much of the mPJR+ condition that DC-mPJR+ becomes nearly vacuous? We address this concern through a small empirical study on synthetic two-dimensional Euclidean instances with $M = N$. The goal is not to provide a comprehensive empirical evaluation, but to check whether DC-mPJR+ remains selective on structured instances with natural cohesive groups, and whether its behavior is similar to that of mPJR+ in practice.
 
Carrying out such a comparison requires an mPJR+ verifier that is practical at experimental scale. Since we focus on small values of $k$, we use the $O(mn\log n\cdot 2^k)$ algorithm
from \Cref{prop:direct-mpjrplus-verifier}. 
In our implementation, we replace the possibly fractional quota $q = n/k$ with the equivalent integer comparison $|S(c, r, Y)| \cdot k \ge (|Y| + 1)\, n$.

\subsection{Experimental Setup}
\label{subsec:experiments-setup}

\paragraph{Instance generation.}
For each $n \in \{20, 50, 80, 100\}$ and each number of latent clusters $g \in \{4, 5, 6\}$, we generate $50$ independent Euclidean instances as follows. We first sample $g$ cluster centers uniformly from $[0, 1]^2$, then assign approximately $n / g$ agents to each cluster and perturb each agent's location by isotropic Gaussian noise with standard deviation $0.04$. The candidate set is identified with the agent set, so $M = N$. This clustered model creates natural cohesive groups around the sampled centers, providing meaningful structure against which both axioms can be tested. Since $k = 5$ and all tested values of $n$ are divisible by $5$, the quota is exactly $q = n / k$. For each instance, we sample $1{,}000$ center selections uniformly at random from all size-$5$ subsets of $M$, yielding $50{,}000$ sampled selections per $(n, g)$ pair.

\paragraph{Verification.}
We check DC-mPJR+ using Algorithm \ref{alg:verify-sc-mpjrplus}, and mPJR+ using Algorithm~\ref{alg:verify-mpjr-smallk}.

\subsection{Results}
\label{subsec:experiments-results}

\Cref{fig:sanity-check} reports the fraction of sampled center selections satisfying each axiom across all $(n, g)$ settings. As expected, DC-mPJR+ is consistently more permissive than mPJR+, but the gap remains moderate throughout. Across all configurations, satisfaction rates range from $10.3\%$ to $49.4\%$ for mPJR+ and from $10.5\%$ to $54.4\%$ for DC-mPJR+.

\begin{figure}[h]
    \centering
    \includegraphics[width=1.0\linewidth]{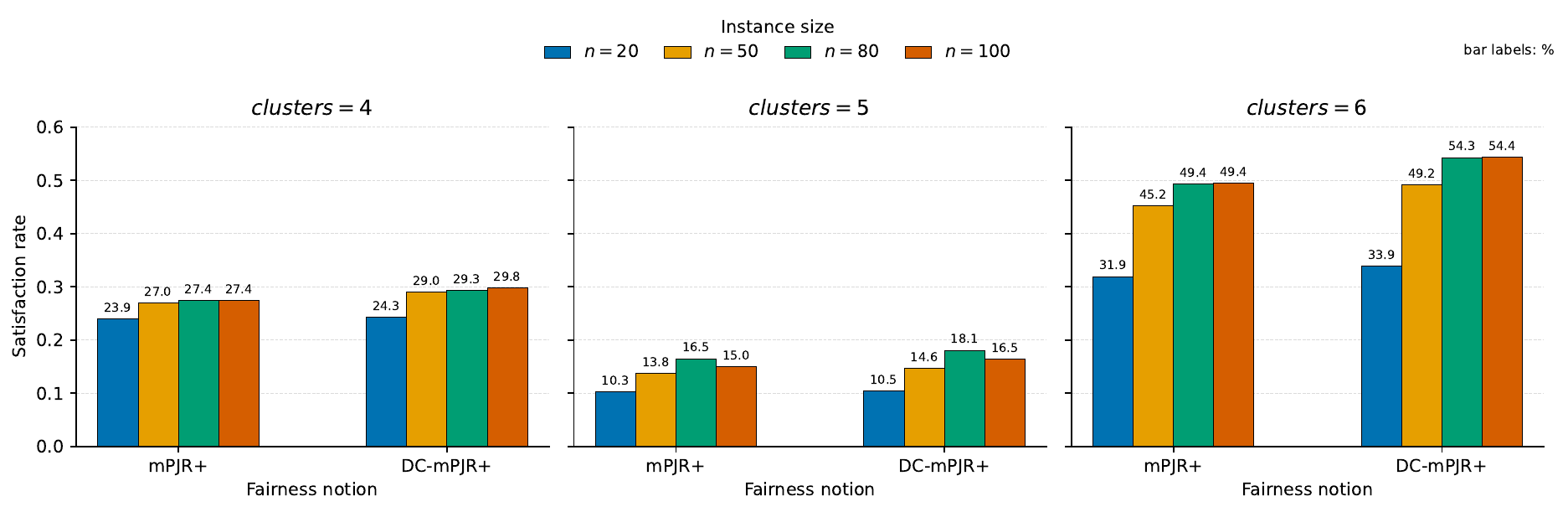}
    \caption{Synthetic sanity check with $M = N$ and $k = 5$. Each subplot fixes the number of latent Gaussian clusters $g \in \{4, 5, 6\}$; bars show the fraction of sampled center selections satisfying mPJR+ and DC-mPJR+ across $n \in \{20, 50, 80, 100\}$.}
    \label{fig:sanity-check}
\end{figure}

Two observations are worth highlighting. First, DC-mPJR+ does not become vacuous on these structured instances: even after restricting to default coalitions, it continues to reject a substantial fraction of uniformly sampled selections. The clustered model is particularly informative here, since $g$ balanced clusters paired with $k = 5$ committee size make it plausible that a random selection fails to allocate representation across the natural cluster structure, and the low satisfaction rates confirm that both axioms are sensitive to such under-representation.

Second, the moderate gap between the two axioms suggests that, on these instances, most mPJR+ violations are already witnessed by default coalitions. This empirically supports the theoretical role of DC-mPJR+ established in \Cref{prop:sc-mpjrplus-implies-mpjrplus}: not only is DC-mPJR+ provably close to mPJR+ in the worst case, it also tracks mPJR+ closely in practice.
\section{Discussion}
\label{sec:discussion-conclusion}

\paragraph{Standard objectives need not ensure proportional representation.}
Classical objectives such as $k$-means and $k$-median optimize aggregate fit rather than proportional representation. \citet{aziz2024proportionally} gave an explicit Euclidean construction (Example~4 in their paper) showing that both $k$-means and $k$-median can produce solutions violating PRF. We use the same construction to show that these methods may violate mPJR+ and DC-mPJR+. Figure~\ref{fig:standard-objectives-fail} illustrates the key geometry: a compact coalition of size $2q$ is assigned only one center by $k$-means/$k$-median, while a smaller but more dispersed group of size $q$ receives two, because the dispersed group contributes more to the aggregate objective. The $2q$ agents in the compact coalition, together with an unselected center nearby, witness a DC-mPJR+ violation (and hence an mPJR+ violation) at level $\ell = 2$, since only one selected center is within the required radius of their coalition.

\begin{figure}[h]
    \centering
    \includegraphics[width=0.7\linewidth]{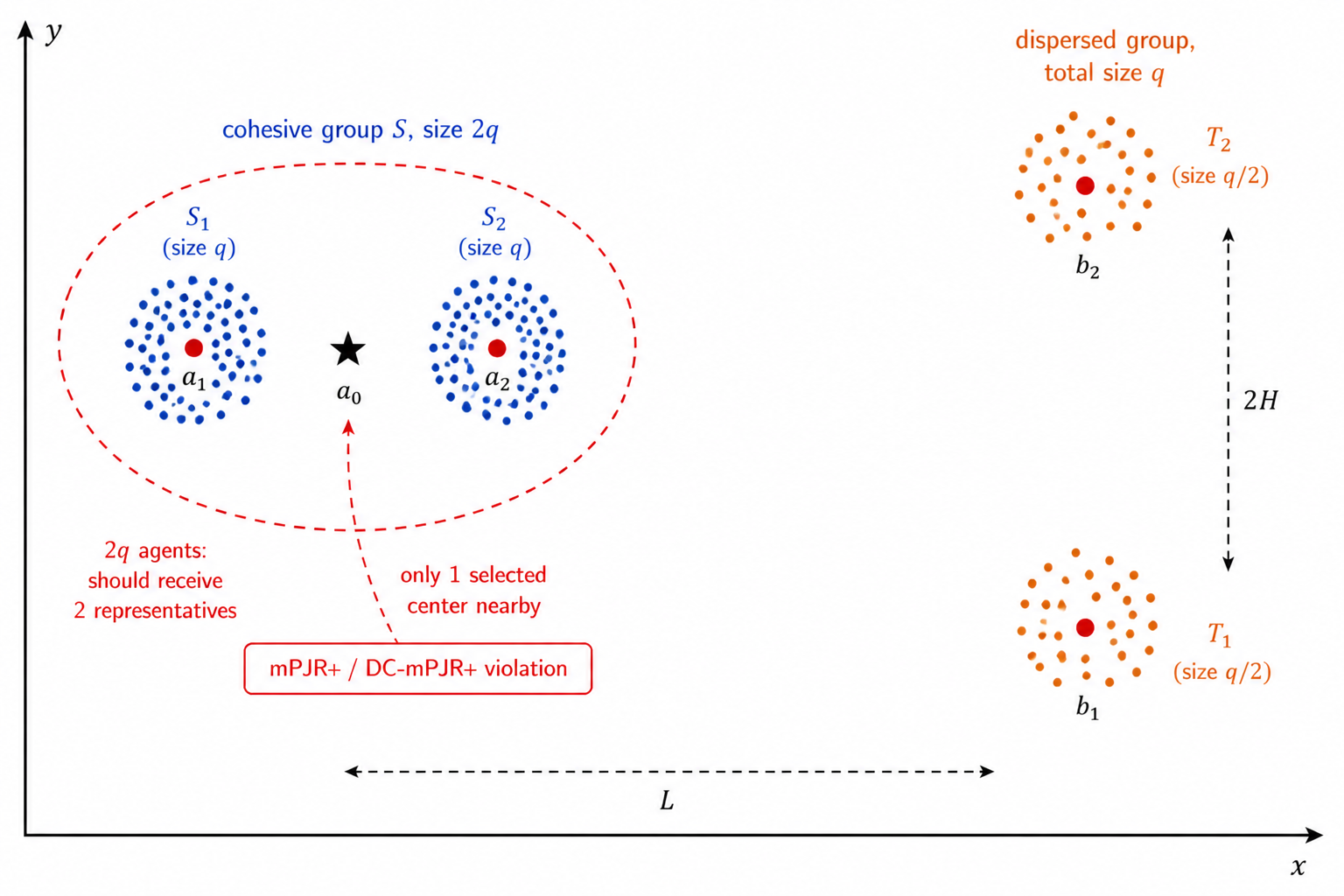}
    \caption{
    Adapted from Example~4 of \citet{aziz2024proportionally}. We assume that $L$ is much larger than $2H$, and $2H$ is much larger than the diameter of $S$. On this instance, 
    \(k\)-means/\(k\)-median selects \(\{a_0,b_1,b_2\}\), violating
    mPJR+ and DC-mPJR+ (witnessed by $c = a_1$, $\ell=2$, and $N_{c, \ell} = S$); proportional center selections include
    \(\{a_1,a_2,b_1\}\) or \(\{a_1,a_2,b_2\}\).
    }
    \label{fig:standard-objectives-fail}
\end{figure}


\paragraph{Limitations and future directions.}
The synthetic experiment is intended as a sanity check rather than a comprehensive empirical evaluation. 
Its purpose is to test whether DC-mPJR+ remains selective, not to establish empirical performance across domains. 
Future work could apply our verification algorithm to real datasets where representation across latent groups is important, as a means of measuring the success of currently deployed clustering algorithms. 

Our work allows proportional representation in clustering to be studied, not just as an objective of algorithm design, but as an auditable property of existing solutions. 
On the theoretical side, a natural question is whether the default coalition approach can be useful for tractable verification of other proportional representation axioms \citep{ebadian2025boosting,kalayci2024proportional}.
For example, one could investigate the approximate notion of $q$-\emph{core}, which requires that no coalition can collectively deviate to alternative centers making every member strictly better off \citep{ebadian2025boosting}. 
Default coalitions may provide a useful starting point, though the core's utility improvement requirement differs structurally from the fixed utility level $\ell$ in PJR, and whether an analogous efficiently verifiable relaxation exists remains open.

\bibliography{citation}
\bibliographystyle{plainnat}


\appendix
\section*{Organization of the Appendix}
The appendix is organized as follows.

\begin{itemize}[leftmargin=10pt]
    \item \textbf{Appendix~\ref{app:embedding}}: describes the approval-to-metric embedding (Lemma~\ref{lem:approval-metric}) and proves coNP-completeness of mPJR verification (Theorem~\ref{thm:mpjr-hard}). These results establish the foundational connection between discrete approval voting and metric fairness that underlies all subsequent hardness results.
    
    \item \textbf{Appendix~\ref{app:mpjrplus}}: show that our definition of mPJR+ follows the Kellerhals--Peters template (Proposition~\ref{prop:kp-template-mpjrplus}), and contains the proofs that SEAR satisfies mPJR+ (Proposition~\ref{prop:sear-pjrplus}) and that mPJR+ is polynomial-time verifiable via submodular minimization (Proposition~\ref{prop:pjrplus-check}).

    \item \textbf{Appendix~\ref{app:discrete_hardness}}: establishes the fixed-$\ell$ coNP-hardness of PJR+ verification in the setting of multiwinner voting with approval ballots (Proposition~\ref{prop:discrete_pjrplus_fixedell}). This result, which may be of independent interest, underlies the metric fixed-$\ell$ hardness result (Proposition~\ref{prop:mpjrplus_fixedell}) via the embedding from Appendix~\ref{app:embedding}.
    

\end{itemize}

\section{The Approval-to-Metric Embedding and mPJR Hardness}
\label{app:embedding}

The results in this section establish the connection between multiwinner voting with approval ballots and proportional clustering. The embedding (Lemma~\ref{lem:approval-metric}) is the key technical tool: it allows us to transfer hardness results from the well-studied approval voting setting into our metric framework. We first establish the properties of the embedding, then use it to prove that mPJR is coNP-hard.

\subsection{Proof of \Cref{lem:approval-metric}}\label{proof of lemma}

\lemApprovalMetric*

\begin{proof}
Let $\mathcal{U}:=N\cup M$. Consider the complete bipartite graph with bipartition $(N,M)$, and assign
to each edge $(i,c)\in N\times M$ the weight
\[
w(i,c)=
\begin{cases}
1, & \text{ if }c\in \mathcal A_i,\\
2, & \text{ if }c\notin \mathcal A_i.
\end{cases}
\]
Let $d$ be the shortest-path distance induced by this weighted graph.
Since the graph is connected and all edge weights are positive, $d$ is a metric on $\mathcal{U}$.
Moreover, $d$ only takes integer values, and, by the triangle inequality, $d(x, y)\le 4$
for all $x, y\in N\cup M$.

We claim that for every agent $i\in N$ and every candidate $c\in M$ it holds that
\[
d(i,c)=
\begin{cases}
1, & \text{ if }c\in \mathcal A_i,\\
2, & \text{ if }c\notin \mathcal A_i.
\end{cases}
\]
Indeed, if $c\in \mathcal A_i$, then the edge $(i,c)$ has weight $1$, so $d(i,c)\le 1$.
Since all edge weights are positive, no shorter path exists, and hence $d(i,c)=1$.

On the other hand, if $c\notin \mathcal A_i$, then the direct edge $(i,c)$ has weight $2$.
Any other path from an agent to a candidate in a bipartite graph must have odd length, so has total length at least three. 
Hence $d(i,c)=2$.

Therefore, for every agent $i\in N$ it holds that
\[
B(i,r)\cap M=
\begin{cases}
\emptyset, & \text{ if }r<1,\\[2mm]
\mathcal A_i, & \text{ if }1\le r<2,\\[2mm]
M, & \text{ if }r\ge 2.
\end{cases}
\]
The second statement follows immediately: for every $S\subseteq N$ and every $1\le r<2$, we have
\[
\bigcap_{i\in S}\left(B(i,r)\cap M\right)=\bigcap_{i\in S} \mathcal A_i,
\qquad
A_r(S)
=
\left(\bigcup_{i\in S} B(i,r)\right)\cap M
=
\bigcup_{i\in S} \mathcal A_i.
\]
\end{proof}

\subsection{Proof of \Cref{thm:mpjr-hard}}\label{proof of thm}

\thmMpjrHard*

With the embedding of \Cref{lem:approval-metric} in hand, the hardness of mPJR verification follows by a direct reduction from the hardness of PJR verification in the approval setting. 
The key observation is that the embedding preserves both cohesiveness (intersection of approval sets) and coverage (union of approval sets) at any radius $y \in [1,2)$, so violations transfer in both directions.

\begin{proof}
We first argue that this problem is in coNP. Indeed, 
given an instance $(N, M, k)$ of centroid clustering in a metric space $({\mathcal U}, d)$ and a center selection $X$, a certificate that $X$ violates mPJR consists of a radius $r\in\{d(i, c)\mid i\in N, c\in M\}$, an integer $\ell\in[k]$, and a coalition $S\subseteq N$.
Given such a certificate, we can check in polynomial time that
\[
    \text{ (1) }\quad|S|\ge \ell\cdot q,\qquad\text{ (2) }\quad
    \left|\bigcap_{i\in S}(B(i,r)\cap M)\right|\ge \ell,\qquad \text{ (3) }\quad
    |X\cap A_r(S)|<\ell.
\]
Note that it is sufficient to only check the radii in $R = \{d(i, c)\mid i\in N, c\in M\}$: indeed, if there is an mPJR violation with a certificate $(r, \ell, S)$, where
$r\not\in R$, there is also an mPJR violation with a certificate $(r', \ell, S)$, where $r'=\max\{q\in R: q\le r\}$; this is because $A_r(T)=A_{r'}(T)$ for every $T\subseteq N$.
It then follows that our certificate is of polynomial size, as we assume that the distance $d$ is polynomial-time computable (and hence every element of $R$ can be encoded by polynomially many bits).

It remains to prove coNP-hardness. \citet{aziz2018complexity} consider the following computational problem, which we will refer to as {\sc PJR Verification}:
given an approval instance
${\mathcal I}=(N,M,(\mathcal A_i)_{i\in N}, k)$ and a selected committee $W\subseteq M$ with $|W|=k$, determine whether $W$ satisfies PJR. 
By Theorem~3 of \citet{aziz2018complexity}, 
{\sc PJR Verification} is coNP-complete.

Let $({\mathcal I}, W)$ be an instance
of {\sc PJR verification}, where
${\mathcal I} = (N,M,(\mathcal A_i)_{i\in N}, k)$ and $W$ is a size-$k$ subset of $M$. Apply \Cref{lem:approval-metric} to construct a metric space $(\mathcal{U},d)$ with $\mathcal{U}=N\cup M$, and consider an instance of centroid clustering ${\mathcal I}'=(N, m, k)$ in  
$(\mathcal{U},d)$, together with a center selection $W\subseteq M$.
We claim that $W$ satisfies PJR in $\mathcal I$ if and only if $W$ satisfies mPJR in ${\mathcal I}'$.

\paragraph{($\Rightarrow$)}
Suppose that $W$ violates mPJR in ${\mathcal I}'$. 
Then there exist a radius $r\ge 0$, an integer $\ell\in[k]$, and a set
$S\subseteq N$ such that
\[
\text{ (1) }\quad|S|\ge \ell\cdot q,\qquad\text{ (2) }\quad
\left|\bigcap_{i\in S}(B(i,r)\cap M)\right|\ge \ell,\qquad \text{ (3) }\quad
|W\cap A_r(S)|<\ell.
\]

Note that it has to be the case that $1\le r<2$.
Indeed, if $r<1$, then by the construction of $d$ in \Cref{lem:approval-metric} it holds that
$B(i,r)\cap M=\emptyset$ for all $i\in N$, so $\bigcap_{i\in S}(B(i,r)\cap M)=\emptyset,$
which is incompatible with (2).

Similarly, if $r\ge 2$, then again by \Cref{lem:approval-metric} it holds that 
$B(i,r)\cap M=M$ for all $i\in N$, and hence $A_r(S)=M$.
Therefore, \(|W\cap A_r(S)|=|W|=k\ge \ell,\) which is incompatible with (3). 

Thus, assume that $1\le r<2$. 
Then \Cref{lem:approval-metric} gives
$B(i,r)\cap M= \mathcal A_i$ for all $i\in N$, so
\[
\left|\bigcap_{i\in S} \mathcal A_i\right|
=
\left|\bigcap_{i\in S}(B(i,r)\cap M)\right|
\ge \ell
\]
and
\[
|W\cap \bigcup_{i\in S} \mathcal A_i|
=
|W\cap A_r(S)|
<\ell.
\]
Together with $|S|\ge \ell\cdot q$, this shows that $W$ violates PJR in the approval instance $\mathcal I$.

\paragraph{($\Leftarrow$)}
Now, suppose  that
$W$ violates PJR in $\mathcal I$. Then there exist an integer $\ell\in[k]$ and a set
$S\subseteq N$ such that
\[
|S|\ge \ell\cdot q,\qquad
\left|\bigcap_{i\in S} \mathcal A_i\right|\ge \ell,\qquad
|W\cap \bigcup_{i\in S} \mathcal A_i|<\ell.
\]
Now consider the metric instance ${\mathcal I}'$ at radius $r=1$.
By Lemma~\ref{lem:approval-metric} we have
\[
B(i,1)\cap M=\mathcal A_i \qquad\text{for all } i\in N.
\]
Hence
\[
\left|\bigcap_{i\in S}(B(i,1)\cap M)\right|
=
\left|\bigcap_{i\in S} \mathcal A_i\right|
\ge \ell
\]
and
\[
|W\cap A_1(S)|
=
|W\cap \bigcup_{i\in S} \mathcal A_i|
<\ell.
\]
Since $|S|\ge \ell\cdot  q$, the set $S$ witnesses a violation of mPJR at radius $1$, i.e., $W$ fails mPJR in ${\mathcal I}'$.

This proves the equivalence. Since {\sc PJR Verification} is coNP-complete, the reduction establishes that mPJR verification is coNP-hard. This completes the proof. 
\end{proof}


\section{Computation and Verification of mPJR+}
\label{app:mpjrplus}

In this section, we collect the auxiliary results regarding mPJR+. 
We first show that our definition of mPJR+ is equivalent to applying the Kellerhals--Peters metric-lifting template to the approval-based PJR+ axiom. 
We then prove the two algorithmic results stated in the main text: mPJR+ is achieved by SEAR (Proposition~\ref{prop:sear-pjrplus}) and admits polynomial-time verification via submodular minimization (Proposition~\ref{prop:pjrplus-check}). 

\subsection{Equivalence with the Kellerhals--Peters Lifting Template}
\label{app:kp-template}

\begin{proposition}\label{prop:kp-template-mpjrplus}
For $\gamma = 1$, Definition~\ref{def:mpjrplus} is equivalent to applying the Kellerhals--Peters metric-lifting template to the approval-based PJR+ axiom.
\end{proposition}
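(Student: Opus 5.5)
We first write out what the Kellerhals--Peters template yields when applied to PJR+. For each $r\ge 0$ we form the approval instance $\mathcal I_r$ with $\mathcal A_i=A_r(i)$ and demand PJR+ in every $\mathcal I_r$. Call the resulting axiom KP-mPJR+. It requires the following for every $r\ge 0$, every $\ell\in[k]$, every $c\in M\setminus X$, and every $S\subseteq N$ with $|S|\ge \ell\cdot q$ and $c\in A_r(i)$ for all $i\in S$ (that is, $d(i,c)\le r$ for all $i\in S$):
\[
|X\cap A_r(S)|\ge \ell,
\]
where we use $\bigcup_{i\in S}A_r(i)=A_r(S)$. We then prove the equivalence by contraposition in both directions, comparing violation certificates.

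\emph{Definition~\ref{def:mpjrplus} implies KP-mPJR+.} Suppose $(r,\ell,c,S)$ violates KP-mPJR+. Since $d(i,c)\le r$ for every $i\in S$, we get $r(c,S)=\max_{i\in S}d(i,c)\le r$. Group approval sets are monotone in the radius, so $A_{r(c,S)}(S)\subseteq A_r(S)$. Hence
\[
|X\cap A_{r(c,S)}(S)|\le |X\cap A_r(S)|<\ell .
\]
Together with $|S|\ge \ell\cdot q$ and $c\notin X$, the triple $(\ell,c,S)$ violates Definition~\ref{def:mpjrplus} with $\gamma=1$. (If $S$ were empty, then $|S|\ge \ell q>0$ would fail, so $r(c,S)$ is well defined.)

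\emph{KP-mPJR+ implies Definition~\ref{def:mpjrplus}.} Suppose $(\ell,c,S)$ violates Definition~\ref{def:mpjrplus} with $\gamma=1$. Take $r:=r(c,S)$. Then $d(i,c)\le r$ for all $i\in S$, so $c\in A_r(i)$ for every $i\in S$. Also $|X\cap A_r(S)|<\ell$, so $(r,\ell,c,S)$ violates PJR+ in $\mathcal I_r$ and therefore violates KP-mPJR+.

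There is no real obstacle in this argument. The only points needing care are getting the unfolding of the template exactly right: the union of approval sets equals $A_r(S)$, and the condition ``$c\in\mathcal A_i$ for all $i\in S$'' becomes $r\ge r(c,S)$. The monotonicity $A_{r'}(S)\subseteq A_r(S)$ for $r'\le r$ follows immediately from the definition of closed balls.
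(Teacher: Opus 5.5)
Your proposal is correct and follows essentially the same route as the paper. The paper unfolds the template in the same way and uses the same two facts, namely monotonicity of $A_r(S)$ in $r$ for one direction and instantiating the lifted axiom at $r=r(c,S)$ for the other; the only difference is that the paper argues each implication directly, whereas you phrase both as contrapositives by transferring violation certificates.
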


\begin{proof}
Applying the Kellerhals--Peters template to PJR+ requires that, for every $r \ge 0$, the approval instance $(N,M,(A_r(i))_{i \in N},k)$ satisfies PJR+: for every $\ell \in [k]$, every unselected candidate $c \in M \setminus X$, and every coalition $S \subseteq N$ with $|S| \ge \ell\cdot q$ and $c \in \bigcap_{i\in S}A_r(i)$, it holds that
\[
    \left|X \cap A_r(S)\right| \ge \ell .
\]
In metric notation, $c \in \bigcap_{i \in S} A_r(i)$ is equivalent to $\max_{i \in S} d(i,c) \le r$. Thus, the lifted axiom can be written as follows: for every $r \ge 0$, $\ell \in [k]$, $c \in M \setminus X$, and $S \subseteq N$ with $|S| \ge \ell\cdot q$, it holds that
\[
    \max_{i \in S} d(i,c) \le r
    \quad\text{implies that}\quad
    |X \cap A_r(S)| \ge \ell .
\]

$(\Rightarrow)$ Suppose $X$ satisfies Definition~\ref{def:mpjrplus} with $\gamma=1$. Fix any $r,\ell,c,S$ satisfying the premise of the lifted axiom, and set $r(c, S):=\max_{i \in S}d(i,c)\le r$. By Definition~\ref{def:mpjrplus}, $|X\cap A_{r(c, S)}(S)|\ge \ell$. Since $r(c, S)\le r$ implies $A_{r(c, S)}(S)\subseteq A_r(S)$, we obtain
\[
    |X\cap A_r(S)|\ge |X\cap A_{r(c, S)}(S)|\ge \ell .
\]

$(\Leftarrow)$ Conversely, suppose the lifted axiom holds. Fix $\ell \in [k]$, $c \in M \setminus X$, and $S \subseteq N$ with $|S|\ge \ell\cdot q$, and set $r(c, S):=\max_{i \in S}d(i,c)$. Then $c\in \bigcap_{i \in S}A_{r(c, S)}(i)$. Applying the lifted axiom at $r=r(c, S)$ yields
\[
    |X\cap A_r(S)|\ge \ell,
\]
which is exactly Definition~\ref{def:mpjrplus} for $\gamma=1$.
\end{proof}

\subsection{Description of Spatial Expanding Approvals Rule (SEAR)}

    The Spatial Expanding Approval Rule (SEAR) selects $k$ centers by gradually expanding the radius within which agents may support candidate centers. 
    Initially, each agent has weight $1$, and the algorithm orders all agent--candidate distances from smallest to largest. 
    At a given radius, a candidate $c$ is eligible if the total remaining weight of agents within that radius from $c$ is at least the quota $q$.
    Starting from the smallest distance, SEAR increases the radius until some candidate becomes eligible. It then selects an eligible candidate with maximum weighted support,
    adds it to the solution, and removes it from the set of available candidate locations.
    After selecting a center $c^*$ at radius $r$, SEAR reduces the weights of the agents within distance $r$ from $c^*$ in an arbitrary fashion so that their total weight decreases by $q$. 
    The procedure repeats until $k$ centers have been chosen. 
    See Algorithm~\ref{alg:SEAR} for the pseudocode.

    \begin{algorithm}[h]
    \caption{{SEAR (Spatial Expanding Approval Rule)} for Discrete Clustering}
    \label{alg:SEAR}
    \KwIn{An instance of centroid clustering $(N, M, k)$ for a metric space $(\mathcal U, d)$ with $|N|=n$}
    \KwOut{A set of $k$ centers}
    
    $w_i\gets 1$ for each $i\in N$
    
    Order the elements of the set $D=\{d(i,c)\mid i\in N,c\in M\}$ as $d_1\leq d_2 \leq \cdots \leq d_{|D|}$
    
    $j\gets 1$
    
    $W\gets \varnothing$
    
    \While{$|W|<k$}{
        $C^*\gets \left\{c\in M\mid \sum_{i\in B(c, d_j) \cap N} w_i\geq n/k\right\}$
        
        \If{$C^*=\varnothing$}{
            $j\gets j+1$
        }
        \Else{
            Select some candidate $c^*$ from $\arg\max_{c\in C^*} \sum_{i\in B(c, d_j)\cap N} w_i$
            
            $W\gets W\cup \{c^*\}$            
            
            $M\gets M\setminus \{c^*\}$          
            
            $N'\gets B(c^*, d_j)\cap N$        
            
            Modify the weights of agents in $N'$ so that their total weight, i.e., $\sum_{i\in N'} w_i$, decreases by exactly $n/k$
        }
    }
    \Return $W$
    \end{algorithm}
    
\subsection{Proof of Proposition~\ref{prop:sear-pjrplus} (SEAR satisfies mPJR+)}\label{proof of propSEAR}


\propSearPjrplus*
\begin{proof}
    Let $X$ be the set returned by SEAR. Suppose, for contradiction, that $X$ violates mPJR+. Then there exist $c \in M \setminus X$, $\ell \in [k]$, and $S \subseteq N$ such that $|S| \ge \ell\cdot q$ and $|X \cap A_{r(c, S)}(S)| < \ell$, where \(r(c, S) := \max_{i \in S} d(i,c)\).
    
    For readability, let $r=r(c, S)$, and set \(s := |X \cap A_{r}(S)|\). Then $s \le \ell-1$. Since each agent starts with a budget of $1$, the total initial budget of the agents in $S$ is at least $|S| \ge \ell\cdot q$.
    
    Now consider the execution of SEAR up to the moment when the global radius is to be increased beyond~$r$. 
    If an agent $i\in S$ pays for a center $j$ when the current radius is at most $r$, then $d(i,j) \le r$, hence $j \in B(i,r) \cap M \subseteq A_{r}(S)$. Therefore, up to radius $r$, agents in $S$ can spend their budget on centers in $X \cap A_{r}(S)$ only.
    Each opened center costs $q$, and there are only $s$ centers that agents in $S$ can contribute to. Hence the total amount spent by agents in $S$ up to radius $r$ is at most $s\cdot q$. It follows that the remaining total budget of agents in $S$ at radius $r$ is at least $\ell\cdot q - s\cdot q \ge q$. Moreover, by the definition of \(r\), for every agent \(i\in S\) we have \(d(i,c)\le r\). Thus every agent in \(S\) approves the unselected center \(c\) at radius \(r\), and agents in $S$ can collectively afford $c$.
    However, SEAR does not increase the radius while some unselected center is affordable, a contradiction.
    Hence no such triple \((c,\ell,S)\) exists, and \(X\) satisfies mPJR+.
\end{proof}

\subsection{Proof of Proposition~\ref{prop:pjrplus-check}}\label{proof of pjrplus-check}

\propPjrplusCheck*

The proof adapts the submodular minimization technique of \citet{brill2023robust} to the metric setting. For each unselected center $c$ and radius $r$, checking for a violation reduces to minimizing a submodular function. Since the number of relevant $(c, r)$ pairs is at most $mn$, the overall procedure runs in polynomial time.

\begin{proof}
    We adapt a submodular minimization argument of \citet[Proposition~2]{brill2023robust} to the metric setting.

    Recall that a function $g: 2^Z\to\mathbb R$ that maps subsets of a set $Z$ to real numbers is {\em submodular} if for all $X, Y\subseteq Z$ with $X\subseteq Y$ and all $z\in Z\setminus Y$ it holds that $g(X\cup\{z\})-g(X)\ge g(Y\setminus\{z\})-g(Y)$. A canonical example of a submodular function is a coverage function: given a finite ground set $G$ and a collection ${\mathcal S}=\{S_1, \dots, S_t\}$ of subsets of $G$, the coverage function $g^*: 2^{[t]}\to\mathbb Z$ is defined as $g^*(I)=|\cup_{i\in I} S_i|$; it is easy to check that $g^*$ is submodular.
    Note that if $g:2^Z\to\mathbb R$ is a submodular function, then the function 
    $g':2^Z\to\mathbb R$ defined as $g'(X)=g(X)+\alpha\cdot |X|$ for some $\alpha\in\mathbb R$
    is submodular, too; this follows immediately from the definition of submodularity.
    
    Fix a center $c \in M \setminus X$ and a radius $r \ge 0$. Consider the agents who approve $c$ at radius $r$, i.e., the set $B(c,r)\cap N$. For every subset $S \subseteq B(c,r)\cap N$, define
    \[
    f_{c,r}(S):=|X \cap A_r(S)|-\frac{|S|}{q}.
    \]
    We first show that $f_{c,r}$ is submodular. We have $A_r(S)=\left(\bigcup_{i\in S} B(i,r)\right)\cap M = \bigcup_{i\in S} A_r(i)$. Hence, the set function $f': S \mapsto |X\cap A_r(S)|$ is a coverage function: it counts how many members of the fixed set $X$ are covered by the family of sets $\{A_r(i)\}_{i\in S}$. Therefore it is submodular. Since $f_{c,r}(S)=f'(S)+\alpha\cdot |S|$ for $\alpha=\frac{1}{q}$, it follows that $f_{c, r}$ is submodular, too.

    We will now argue that there exists a violation of mPJR+ if and only if $f(c, r)\le -1$ for some $c\in M\setminus X$ and some $r\in \{d(i,c)\mid i\in N\}$.

    Indeed, suppose there is a violation of mPJR+, witnessed by a center $c\in M\setminus X$, an $\ell\in [k]$, 
    and a set $S$ with $|S|\ge \ell\cdot q$.
    Let $r=r(c, S)=\max_{i\in S}d(i, c)$. Note that $r\in \{d(i,c)\mid i\in N\}$.
    We claim that in this case $f_{c, r}(S)\le -1$. Indeed, we have $|X\cap A_r(S)|\le \ell-1$ and hence
    $$
    f_{c, r}(S)= |X\cap A_r(S)| -\frac{|S|}{q} \le \ell-1-\ell = -1.
    $$
    Conversely, suppose $f_{c, r}(S)\le -1$ for some $c\in M\setminus X$, $r\in \{d(i,c)\mid i\in N\}$ and $S\subseteq B(c, r)\cap N$. Let $\ell:=\left\lfloor \frac{|S|}{q}\right\rfloor$; then $|S| \ge \ell\cdot q$. We claim that the triple $(c, \ell, S)$
    is a witness that mPJR+ is violated. Indeed, let $r(c, S)=\max_{i\in S}d(i, c)$. Since $S\subseteq B(c, r)$, we have $r(c, S)\le r$. From $f_{c, r}(S)\le -1$ we obtain
    $$
    |X\cap A_r(S)| -\frac{|S|}{q}\le -1;
    $$
    as $\frac{|S|}{q}\ge \ell$, this implies $|X\cap A_r(S)|\le \ell-1$. As we have established that $r(c, S)\le r$, it follows that 
    $|X\cap A_{r(c, S)}(S)|\le \ell-1$, i.e., there is a violation of mPJR+.
 
    
    
    
    
We conclude that we can verify mPJR+ by iterating over all unselected centers $c\in M\setminus X$ and all $r\in \{d(i,c)\mid i\in N\}$, minimizing the corresponding submodular function $f_{c,r}$ on $2^{B(c, R)\cap N}$, and checking if the resulting minimum value is at most $-1$. Since there are at most $mn$ functions to consider, 
and submodular minimization admits a polynomial-time algorithm~\citep{lee2015faster}, 
the verification terminates in polynomial time.
\end{proof}

\section{Fixed-$\ell$ Hardness of PJR+ and mPJR+}
\label{app:discrete_hardness}

In Section~\ref{sec:limits}, we stated that verifying mPJR+ for a fixed representation level $\ell$ is intractable (Proposition~\ref{prop:mpjrplus_fixedell}). 
The proof of that result relies on a corresponding hardness result in the setting of multiwinner voting with approval ballots, which we state and prove here.
Beyond its role in our reduction, this result may be of independent interest to the social choice community, as it rules out a very natural approach to devising more efficient algorithms for verification of PJR+.
\subsection{Hardness Proof for Fixed-$\ell$ PJR+}
We refer the reader to Section~\ref{sec:preliminaries} for background on multiwinner voting with approval ballots.

\begin{proposition}\label{prop:discrete_pjrplus_fixedell}
    Given an instance $(N, M, ({\mathcal A}_i)_{i\in N}, k)$ of multiwinner voting with approval ballots, a selected committee $X$, and a fixed integer $\ell\in[k]$, it is coNP-complete to decide whether $X$ satisfies the fixed-$\ell$ PJR+ condition; that is, whether for every
    unselected candidate $c\notin X$ and every coalition of voters $S\subseteq N$ such that
    \[
        |S|\ge \ell\cdot q
        \quad\text{and}\quad
        c\in \bigcap_{i\in S}\mathcal A_i
    \]
    it holds that
    \[
        \left|X\cap \bigcup_{i\in S}\mathcal A_i\right|\ge \ell .
    \]
\end{proposition}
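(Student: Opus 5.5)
Membership in coNP is immediate. A violation is certified by an unselected candidate $c\notin X$ and a coalition $S\subseteq N$. Given these, we check in polynomial time that $|S|\ge \ell\cdot q$, that $c\in\mathcal A_i$ for all $i\in S$, and that $|X\cap\bigcup_{i\in S}\mathcal A_i|<\ell$. For hardness I would reduce from \textsc{Balanced Biclique}: given a bipartite graph $G=(L,R,E)$ and $t$, decide whether there exist $L'\subseteq L$ and $R'\subseteq R$ with $|L'|=|R'|=t$ such that every pair in $L'\times R'$ is an edge. We may assume $1\le t\le |R|$. The heart of the construction is the one sketched for Proposition~\ref{prop:mpjrplus_fixedell}. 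There is a voter $v_u$ for each $u\in L$ and a committee candidate $c_w$ for each $w\in R$. Voter $v_u$ approves $c_w$ iff $\{u,w\}\notin E$. There is also an extra unselected candidate $z$, approved by every $v_u$.

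The key observation is that for $S=\{v_u:u\in L'\}$ the candidates $c_w$ not in $\bigcup_{i\in S}\mathcal A_i$ are exactly the common neighbours of $L'$. Hence $|X\cap\bigcup_{i\in S}\mathcal A_i|=|R|-|\mathrm{CN}(L')|$, where $\mathrm{CN}(L')$ is the set of common neighbours of $L'$. A violation with $|S|\ge t$ then amounts to $|\mathrm{CN}(L')|\ge |R|-\ell+1$. We want this to say ``$t$ voters with $t$ common neighbours'', so I set $\ell:=|R|-t+1$. We must then also force $\ell\cdot q=t$, i.e.\ $n\ell=tk$, while keeping $n\ge |L|$ and $k\ge |R|$. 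I would handle this with two kinds of padding that cannot interfere.
\begin{itemize}
\item We add $b$ dummy committee members that nobody approves. This raises $k$ without changing any union or intersection.
\item We add dummy voters that approve every committee member but not $z$. They can never belong to a coalition anchored at $z$.
\end{itemize}
Concretely, choose an integer $s$ with $\ell s\ge |R|$ and $ts\ge |L|$, and set $k:=\ell s$ and $n:=ts$. Then $q=n/k=t/\ell$ and $\ell q=t$ exactly.

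For correctness, I would argue both directions in the padded instance. The only unselected candidates are $z$ and any remaining non-committee candidates (there are none besides $z$). Any coalition anchored at $z$ consists of original voters only. Such a coalition violates iff $|S|\ge t$ and $|\mathrm{CN}(L')|\ge t$. Since shrinking $L'$ to size $t$ only enlarges $\mathrm{CN}(L')$, this holds iff $G$ has a $t$-by-$t$ biclique. The unapproved dummy committee members never contribute to the union, so they do not change the count. This gives NP-hardness of the complement, hence coNP-hardness.

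The step I expect to be the main obstacle is exactly this arithmetic calibration. The proof sketch's naive choice $\ell=t$ does not match the quota, and the union counts non-adjacencies. One has to make the padding simultaneously keep $\ell$ fixed, make $\ell q$ equal the biclique size, and avoid creating spurious coalitions. For the metric statement, Proposition~\ref{prop:mpjrplus_fixedell}, I would then apply Lemma~\ref{lem:approval-metric}. Every coalition anchored at $z$ made of original voters has $r(z,S)=1$, so its approval sets at that radius are the ballots. Any coalition containing a dummy voter has $r(z,S)=2$, and then $A_2(S)=M$ covers all $k\ge\ell$ committee members. So violations correspond exactly.
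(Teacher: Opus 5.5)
Your reduction is correct, but it calibrates the parameters differently from the paper. The paper first normalizes the \textsc{Balanced Biclique} instance so that $|L|=|R|=2t-1$: it pads both sides with isolated vertices, then adds $p=n'-2t+1$ universal vertices per side and sets $t'=t+p$. After this, the voting instance needs no padding, since $n=k=2t-1$ and $q=1$, and $\ell=t$ coincides with your $\ell=|R|-t+1$. So the sketch's choice $\ell=t$ is not actually mismatched; it relies on a normalization that the main-text sketch omits.

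You instead leave the graph untouched and move the calibration into the voting instance. You add committee members nobody approves and voters who do not approve $z$, so that $n=ts$, $k=\ell s$, and $\ell q=t$ with a fractional quota.

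What each route buys:
\begin{itemize}
\item Your padding argument is nearly trivial: dummies can never enter a $z$-anchored coalition or contribute to its union. The paper must argue that an isolated vertex cannot lie in a $t'\times t'$ biclique of the padded graph.
\item The paper's instance is linear in size and has integral quota $q=1$. Yours is polynomial, up to roughly quadratic in size.
\item Your construction depends on $q=n/k$ exactly, and would break under $q=\lceil n/k\rceil$, which the paper's own proof writes at one point. Since the paper defines $q:=n/k$, this is fine.
\end{itemize}
One small inconsistency: you say both that dummy voters approve every committee member and that the dummy committee members are approved by nobody. It is harmless, because dummy voters never enter a $z$-anchored coalition.

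The transfer to the metric setting via Lemma~\ref{lem:approval-metric} works the same way for both constructions, since any coalition with $r(z,S)=2$ covers all of $X$.
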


\begin{proof}
    We first observe that this problem is in coNP.
    Indeed, a certificate that the fixed-$\ell$ PJR+ condition is violated consists of an unselected candidate
    $c\notin X$ and a coalition of voters $S\subseteq N$.
    Given such a certificate, we can check in polynomial time that
    \[
        \text{ (1) }\quad|S|\ge \ell\cdot q,\qquad\text{ (2) }\quad
        c\in \bigcap_{i\in S}\mathcal A_i,
        \qquad \text{ (3) }\quad
        \left|X\cap \bigcup_{i\in S}\mathcal A_i\right|<\ell .
    \]

    To prove coNP-hardness, it suffices to show that the complementary problem of 
    deciding whether there exists a violating pair $(c,S)$ for the fixed-$\ell$ PJR+ condition is NP-hard. 
    
    We reduce from the \textsc{Balanced Biclique} problem. An instance of this problem is given by a bipartite graph $G=(L,R,E)$ and an integer $t$. It is a yes-instance if there exist sets $L'\subseteq L$ and $R'\subseteq R$ with $|L'|=|R'|=t$ such that every vertex in $L'$ is adjacent to every vertex in $R'$, and a no-instance otherwise.

    We first show that we may assume, without loss of generality, that the input biclique instance satisfies $|L|=|R|=2t-1$. 
    Starting from an arbitrary instance $(G=(L,R,E),t)$ of \textsc{Balanced Biclique},
    let
    \[
    n' := \max\{|L|,|R|,2t-1\}.
    \]
    We first add isolated vertices to the smaller side, and if necessary to both
    sides, so that both sides have size exactly $n'$. This does not change whether
    the graph contains a $t\times t$ biclique.
    
    Next, let
    \[
    p := n' - 2t + 1 \ge 0
    \qquad\text{and}\qquad
    t' := t+p = n'-t+1.
    \]
    We add $p$ new vertices to each side, and connect each new vertex to all vertices on the other side; we will refer to these added vertices as {\em universal} vertices.
    Let the resulting graph be
    $G'=(L',R',E')$. Then
    \[
    |L'|=|R'|=n'+p = 2(n'-t+1)-1 = 2t'-1.
    \]
    
    We claim that $G$ contains a $t\times t$ biclique if and only if $G'$ contains
    a $t'\times t'$ biclique. 
    Indeed, if $G$ contains a $t\times t$ biclique, then adding all $p$ universal
    vertices on both sides gives a $(t+p)\times(t+p)=t'\times t'$ biclique in $G'$.
    Conversely, suppose that \(G'\) contains a \(t'\times t'\) biclique. Remove all universal
    vertices from this biclique; the remaining biclique then contains at least $t'-p$ 
    vertices on each side. Moreover, none of these vertices can be among the isolated vertices added during the first stage. Hence, $G$ contains a biclique with at least $t'-p=t$ vertices on each side.  
    
    Therefore, the restricted version of \textsc{Balanced Biclique} in which
    $|L|=|R|=2t-1$ remains NP-hard. In the rest of the proof, we rename $G'$ as $G$
    and $t'$ as $t$, and assume that $|L|=|R|=2t-1$.

    Now, fix such an instance $G=(L,R,E)$ with $|L|=|R|=2t-1$. 
    We construct an instance of multiwinner voting with approval ballots $(N, M, (\mathcal A_i)_{i\in N}, k)$ as follows:
    
    \begin{itemize}[leftmargin=10pt]
        \item \textbf{Voters:} For each $u \in L$, we create a voter $v_u$. Thus, the set of voters is $N := \{v_u \mid u \in L\}$. The total number of voters is $n = 2t-1$.
        \item \textbf{Candidates:} For each $w \in R$, we create a candidate $c_w$. We also create one additional candidate $z$. Thus, the set of candidates is $M := \{c_w \mid w \in R\} \cup \{z\}$.
        \item \textbf{Committee:} Let the selected committee be $X := M \setminus \{z\} = \{c_w \mid w \in R\}$. The committee size is $k = |X| = 2t-1$.
        \item \textbf{Representation Level:} We set the fixed representation level to be $\ell := t$. 
    \end{itemize}
    
    Since $n = 2t-1$ and $k = 2t-1$, the quota required for one representative is $q = \lceil n/k \rceil = 1$. The required coalition size for level $\ell$ is therefore $\ell \cdot q = t$.
    
    \begin{itemize}[leftmargin=10pt]
        \item \textbf{Approval Sets:} Each voter $v_u \in N$ approves the unselected candidate $z$, and $v_u$ approves $c_w$ if and only if there is \emph{no} edge between $u$ and $w$ in $G$. Formally,
        \[
        \mathcal A_{v_u} := \{z\} \cup \{c_w\mid \{u,w\} \notin E\}.
        \]
    \end{itemize}

    We claim that $G$ contains a $t\times t$ biclique if and only if there exists an unselected candidate $c \notin X$ and a coalition of voters $S \subseteq N$ that witness a violation of the fixed-$\ell$ PJR+ condition; that is, $|S| \ge \ell\cdot q = t$, every voter in $S$ approves $c$, and $|X \cap \bigcup_{i \in S} \mathcal A_i| < \ell = t$.

    Since $z$ is the only unselected candidate, any violating coalition must agree on $c = z$. Note that by our construction, all voters approve $z$, so the agreement condition is trivially satisfied for any coalition.

    \medskip
    \noindent
    \textbf{($\Rightarrow$) Graph has a biclique $\implies$ Committee violates fixed-$\ell$ PJR+:} 
    
    Suppose that $G$ contains a $t\times t$ biclique, i.e., there exist $L'\subseteq L$ and $R'\subseteq R$ with $|L'|=|R'|=t$ such that every vertex in $L'$ is adjacent to every vertex in $R'$.
    
    Consider the coalition $S := \{v_u \mid u \in L'\}$ and the candidate set $Y=\{c_w\mid w\in R'\}$. Clearly, $|S| = t = \ell\cdot q$ and $|Y|=t$. By construction, no voter in $S$ approves any of the candidates in $Y$.
    Therefore, the voters in $S$ collectively approve at most 
    $$
    |X\setminus Y| = |X|-|Y|=2t-1-t=t-1<\ell
    $$ 
    candidates in $X$.
    Thus, $z$ and $S$ witness a violation of the fixed-$\ell$ PJR+ condition.

    \medskip
    \noindent
    \textbf{($\Leftarrow$) Committee violates fixed-$\ell$ PJR+ $\implies$ Graph has a biclique:}
    
    Suppose there exists a coalition $S \subseteq N$ such that $|S| \ge \ell\cdot q = t$ and $\left| X \cap \bigcup_{i \in S} \mathcal A_i \right| < \ell = t$.
    Let $L'=\{u\in L\mid v_u\in S\}$, 
    $R'=\{w\in R\mid c_w\in X\setminus \bigcup_{i \in S} \mathcal A_i\}$.
    We claim that $(L', R')$ forms a biclique in $G$.
    Indeed, for each $u\in L'$ and $w\in R'$ it holds that $c_w$ is not 
    approved by $v_u$ and hence $\{u, w\}\in E$.
    Moreover, we have $|L'|=|S|\ge t$, $|R'|> |X|-t=t-1$ and hence $|R'|\ge t$. Thus, we can pick $t$ vertices from $L'$ and $t$ vertices from $R'$ to obtain a $t$-by-$t$ biclique in $G$. 

    This completes the reduction. Since \textsc{Balanced Biclique} is NP-hard, deciding whether a fixed-$\ell$ PJR+ violation exists is NP-hard. Therefore, deciding whether no such violation exists, equivalently whether $X$ satisfies the fixed-$\ell$ PJR+ condition, is coNP-hard.
\end{proof}

\subsection{Transfer to the Metric Setting (Proof of Proposition~\ref{prop:mpjrplus_fixedell})}\label{proof of mpjrplus-fixell}

\propMpjrplusFixedell*

\begin{proof}
We first show that the problem is in coNP. 
A certificate that \(X\) violates the fixed-\(\ell\) mPJR+ condition consists of an unselected center \(c\in M\setminus X\) and a coalition of agents \(S\subseteq N\). 
Given such a certificate, we compute
\[
    r(c, S) := \max_{i\in S} d(i,c).
\]
Note that $r(c, S)$ is polynomial-time computable since we assumed that the distance $d$ is efficiently computable.
We can then check in polynomial time that
\[
    |S|\ge \ell\cdot q
    \qquad\text{and}\qquad
    |X\cap A_{r(c, S)}(S)|<\ell .
\]
Indeed, the first condition is straightforward to verify, and the second can be checked by scanning all centers in \(X\) and, for each $x\in X$, checking whether \(x\in A_{r(c, S)}(S)\), i.e., whether there exists an agent \(i\in S\) with \(d(i,x)\le r(c, S)\). 
Thus, non-satisfaction of the fixed-\(\ell\) mPJR+ condition has a polynomial-size certificate that can be verified in polynomial time. 
Therefore, fixed-\(\ell\) mPJR+ verification is in coNP.

It remains to prove coNP-hardness. 
To this end, 
we reduce from fixed-$\ell$ PJR+ verification problem, which was shown to be coNP-hard in Proposition~\ref{prop:discrete_pjrplus_fixedell}.

Let $(N,M,(\mathcal A_i)_{i\in N},k)$ together with $\ell\in[k]$ and $X\subseteq M$ be an arbitrary instance of the decision problem from Proposition~\ref{prop:discrete_pjrplus_fixedell},
where the question is whether there exists an unselected candidate $c\in M\setminus X$ and a coalition
$S\subseteq N$ such that
\(|S|\ge \ell\cdot q\), \(c\in \bigcap_{i\in S} \mathcal A_i\), and \( \left|X\cap \bigcup_{i\in S} \mathcal A_i\right|<\ell \).
We use Lemma~\ref{lem:approval-metric} to construct a metric space $(\mathcal{U},d)$ with $\mathcal{U}=N\cup M$, where for each $i\in N$, $c\in M$ we have $d(i, c)=1$ if $c\in{\mathcal A}_i$ and $d(i, c)=2$ otherwise.
We keep the same agent set $N$, candidate set $M$, target size $k$, center selection $X$ and level $\ell$.

We claim that the approval instance has a fixed-\(\ell\) PJR+ violation if and
only if the constructed metric instance has a fixed-\(\ell\) mPJR+ violation.

\paragraph{($\Rightarrow$)}
Suppose there exist $c\in M\setminus X$ and $S\subseteq N$ such that
\[
|S|\ge \ell\cdot q,\qquad
c\in \bigcap_{i\in S} \mathcal A_i,\qquad
\left|X\cap \bigcup_{i\in S} \mathcal A_i\right|<\ell.
\]
Since every agent in $S$ approves $c$, Lemma~\ref{lem:approval-metric} implies that
$d(i,c)=1$ for all $i\in S$. Hence $r=\max_{i\in S} d(i,c)=1$. 
Again by Lemma~\ref{lem:approval-metric}, $A_r(S)= A_1(S)=\bigcup_{i\in S} \mathcal A_i$. Therefore
\[
|X\cap A_r(S)|
=
\left|X\cap \bigcup_{i\in S} \mathcal A_i\right|
<\ell.
\]
Thus the same pair $(c,S)$ witnesses a violation in the metric instance.

\paragraph{($\Leftarrow$)} Suppose there exist \(c \in M \setminus X\) and \(S \subseteq N\) such that \(|S| \ge \ell\cdot q\) and $|X \cap A_r(S)| < \ell$, where $r = \max_{i\in S} d(i,c)$. Note that, by  construction of $(\mathcal U, d)$, we have $r\in\{1, 2\}$.
Further, if \(r=2\), then $A_r(S)=A_2(S)=M$, and hence $|X\cap A_r(S)|=|X|=k\ge \ell$, which contradicts \(|X\cap A_r(S)|<\ell\). Thus,  \(r=1\).

Since \(r=1\), we have \(d(i,c)=1\) for every \(i\in S\). By construction,
this means that \(c\in \mathcal A_i\) for every \(i\in S\), or equivalently
\[
    c\in \bigcap_{i\in S} \mathcal A_i.
\]
Moreover, at radius \(1\), Lemma~\ref{lem:approval-metric} gives
\[
    A_r(S)=A_1(S)=\bigcup_{i\in S} \mathcal A_i.
\]
Hence
\[
    \left|X\cap \bigcup_{i\in S} \mathcal A_i\right|
    =
    |X\cap A_r(S)|
    < \ell.
\]
Together with \(|S|\ge \ell\cdot q\), this shows that the same pair \((c,S)\)
witnesses a fixed-\(\ell\) PJR+ violation in the approval instance.

We have argued that a no-instance of the approval-based verification problem maps to a no-instance of the metric verification problem and vice versa. 
Therefore, deciding
whether \(X\) satisfies fixed-\(\ell\)
mPJR+ is coNP-hard.
\end{proof}

\end{document}